%sdbp_isit08.tex
%\documentclass[conference,draftcls,onecolumn,12pt]{IEEEtran-v17} % DRAFT
\documentclass[conference]{IEEEtran-v17}
\pdfoutput=1 

%---------------------------------------------------------
% styles and other macros
\usepackage{xspace}
\usepackage[nosort]{cite}        
\usepackage{url} 
\usepackage[cmex10]{amsmath}
\usepackage{bbm}
\usepackage{graphicx}
\usepackage{paralist}
\usepackage{fancyref}
%\usepackage{svninfo}
% DRAFT
%\usepackage{svnfooter}
\usepackage[stretch=16,shrink=16,step=4]{microtype}
%\usepackage{pdfsync}

%\usepackage{wasysym}
%%%%%%%%%%%%%%%
\newlength{\figwidth}
\setlength{\figwidth}{78mm}
%\setlength{\figwidth}{15cm}
%%%%%%%%%%%%%%%

% Do some tuning of the interword spacing to get closer to published IEEE papers
% restore value unset by AMSmath
\interdisplaylinepenalty=2500

\displaywidowpenalty=500

\widowpenalty=2000
\clubpenalty=2000

\hyphenpenalty=520

\relpenalty=800
\binoppenalty=3000

\makeatletter
\def\@IEEEinterspaceratioM{0.265}
\def\@IEEEinterspaceMINratioM{0.1651}
\def\@IEEEinterspaceMAXratioM{0.38}

% for bold
\def\@IEEEinterspaceratioB{0.31}
\def\@IEEEinterspaceMINratioB{0.19}
\def\@IEEEinterspaceMAXratioB{0.38}
\@IEEEtunefonts
\makeatother

% needs to be increased a bit to reduce warnings
\hbadness=3000

%-----------------------------------------------
%\input{vmr-symbols-vecbold.tex}
%\input{standard-macros.tex}
%\input{defs.tex}
%%%%%%%%%%%%%%%%%%%%%%%%%%%%%%%%%%%%%%%%%%%%%%%%%%%%%%%%%%%%%%%%%%%%%%%%%%%%%
% standard-macros.tex
% 
% $Id: standard-macros.tex 2531 2009-02-22 16:02:51Z schuster $
%
% Commonly used LaTeX macros.
% 
% (c) 2002 by Moritz Borgmann
% major changes by Ulrich Schuster, 12/03-09/07
% Now works with and requires an additional macro file that defines the
% notation convention for random quantities, matrices, vectors, sets, etc.
%
%%%%%%%%%%%%%%%%%%%%%%%%%%%%%%%%%%%%%%%%%%%%%%%%%%%%%%%%%%%%%%%%%%%%%%%%%%%%%

\usepackage{amsmath}
\usepackage{amssymb}
\usepackage{amsfonts}
\usepackage{mathrsfs}
\usepackage{xspace}
\usepackage{bm}
\usepackage{fancyref}
\usepackage{textcomp}
\usepackage[Symbol]{upgreek}

\newcommand{\safemath}[2]{\newcommand{#1}{\ensuremath{#2}\xspace}}

%%%%% Sans Serif Math %%%%

% lower case
\newcommand{\ssa}{\mathsf{a}}
\newcommand{\ssb}{\mathsf{b}}
\newcommand{\ssc}{\mathsf{c}}
\newcommand{\ssd}{\mathsf{d}}
\newcommand{\sse}{\mathsf{e}}
\newcommand{\ssf}{\mathsf{f}}
\newcommand{\ssg}{\mathsf{g}}
\newcommand{\ssh}{\mathsf{h}}
\newcommand{\ssi}{\mathsf{i}}
\newcommand{\ssj}{\mathsf{j}}
\newcommand{\ssk}{\mathsf{k}}
\newcommand{\ssl}{\mathsf{l}}
\newcommand{\ssm}{\mathsf{m}}
\newcommand{\ssn}{\mathsf{n}}
\newcommand{\sso}{\mathsf{o}}
\newcommand{\ssp}{\mathsf{p}}
\newcommand{\ssq}{\mathsf{q}}
\newcommand{\ssr}{\mathsf{r}}
\newcommand{\sss}{\mathsf{s}}
\newcommand{\sst}{\mathsf{t}}
\newcommand{\ssu}{\mathsf{u}}
\newcommand{\ssv}{\mathsf{v}}
\newcommand{\ssw}{\mathsf{w}}
\newcommand{\ssx}{\mathsf{x}}
\newcommand{\ssy}{\mathsf{y}}
\newcommand{\ssz}{\mathsf{z}}

% upper case

%%%%% Boldface Math %%%%%%

% boldface sans serif
\safemath{\bmsa}{\bm{\ssa}}
\safemath{\bmsb}{\bm{\ssb}}
\safemath{\bmsc}{\bm{\ssc}}
\safemath{\bmsd}{\bm{\ssd}}
\safemath{\bmse}{\bm{\sse}}
\safemath{\bmsf}{\bm{\ssf}}
\safemath{\bmsg}{\bm{\ssg}}
\safemath{\bmsh}{\bm{\ssh}}
\safemath{\bmsi}{\bm{\ssi}}
\safemath{\bmsj}{\bm{\ssj}}
\safemath{\bmsk}{\bm{\ssk}}
\safemath{\bmsl}{\bm{\ssl}}
\safemath{\bmsm}{\bm{\ssm}}
\safemath{\bmsn}{\bm{\ssn}}
\safemath{\bmso}{\bm{\sso}}
\safemath{\bmsp}{\bm{\ssp}}
\safemath{\bmsq}{\bm{\ssq}}
\safemath{\bmsr}{\bm{\ssr}}
\safemath{\bmss}{\bm{\sss}}
\safemath{\bmst}{\bm{\sst}}
\safemath{\bmsu}{\bm{\ssu}}
\safemath{\bmsv}{\bm{\ssv}}
\safemath{\bmsw}{\bm{\ssw}}
\safemath{\bmsx}{\bm{\ssx}}
\safemath{\bmsy}{\bm{\ssy}}
\safemath{\bmsz}{\bm{\ssz}}

% boldface small upright math letters
\bmdefine{\bmualphad}{\upalpha}
\bmdefine{\bmubetad}{\upbeta}
\bmdefine{\bmuchid}{\upchi}
\bmdefine{\bmudeltad}{\updelta}
\bmdefine{\bmuepsilond}{\upepsilon}
\bmdefine{\bmuvarepsilond}{\upvarepsilon}
\bmdefine{\bmuetad}{\upeta}
\bmdefine{\bmugammad}{\upgamma}
\bmdefine{\bmuiotad}{\upiota}
\bmdefine{\bmukappad}{\upkappa}
\bmdefine{\bmulambdad}{\uplambda}
\bmdefine{\bmumud}{\upmu}
\bmdefine{\bmunud}{\upnu}
\bmdefine{\bmuomegad}{\upomega}
\bmdefine{\bmuphid}{\upphi}
\bmdefine{\bmuvarphid}{\upvarphi}
\bmdefine{\bmupid}{\uppi}
\bmdefine{\bmuvarpid}{\upvarpi}
\bmdefine{\bmupsid}{\uppsi}
\bmdefine{\bmurhod}{\uprho}
\bmdefine{\bmuvarrhod}{\upvarrho}
\bmdefine{\bmusigmad}{\upsigma}
\bmdefine{\bmuvarsigmad}{\upvarsigma}
\bmdefine{\bmutaud}{\uptau}
\bmdefine{\bmuthetad}{\uptheta}
\bmdefine{\bmuvarthetad}{\upvartheta}
\bmdefine{\bmuupsilond}{\upupsilon}
\bmdefine{\bmuxid}{\upxi}
\bmdefine{\bmuzetad}{\upzeta}

%---------------------------
\safemath{\bmua}{\mathbf{a}}
\safemath{\bmub}{\mathbf{b}}
\safemath{\bmuc}{\mathbf{c}}
\safemath{\bmud}{\mathbf{d}}
\safemath{\bmue}{\mathbf{e}}
\safemath{\bmuf}{\mathbf{f}}
\safemath{\bmug}{\mathbf{g}}
\safemath{\bmuh}{\mathbf{h}}
\safemath{\bmui}{\mathbf{i}}
\safemath{\bmuj}{\mathbf{j}}
\safemath{\bmuk}{\mathbf{k}}
\safemath{\bmul}{\mathbf{l}}
\safemath{\bmum}{\mathbf{m}}
\safemath{\bmun}{\mathbf{n}}
\safemath{\bmuo}{\mathbf{o}}
\safemath{\bmup}{\mathbf{p}}
\safemath{\bmuq}{\mathbf{q}}
\safemath{\bmur}{\mathbf{r}}
\safemath{\bmus}{\mathbf{s}}
\safemath{\bmut}{\mathbf{t}}
\safemath{\bmuu}{\mathbf{u}}
\safemath{\bmuv}{\mathbf{v}}
\safemath{\bmuw}{\mathbf{w}}
\safemath{\bmux}{\mathbf{x}}
\safemath{\bmuy}{\mathbf{y}}
\safemath{\bmuz}{\mathbf{z}}

\safemath{\bmualpha}{\bmualphad}
\safemath{\bmubeta}{\bmubetad}
\safemath{\bmuchi}{\bumchid}
\safemath{\bmudelta}{\bmudeltad}
\safemath{\bmuepsilon}{\bmuepsilond}
\safemath{\bmuvarepsilon}{\bmuvarepsilond}
\safemath{\bmueta}{\bmuetad}
\safemath{\bmugamma}{\bmugammad}
\safemath{\bmuiota}{\bmuiotad}
\safemath{\bmukappa}{\bmukappad}
\safemath{\bmulambda}{\bmulambdad}
\safemath{\bmumu}{\bmumud}
\safemath{\bmunu}{\bmunud}
\safemath{\bmuomega}{\bmuomegad}
\safemath{\bmuphi}{\bmuphid}
\safemath{\bmuvarphi}{\bmuvarphid}
\safemath{\bmupi}{\bmupid}
\safemath{\bmuvarpi}{\bmuvarpid}
\safemath{\bmupsi}{\bmupsid}
\safemath{\bmurho}{\bmurhod}
\safemath{\bmuvarrho}{\bmuvarrhod}
\safemath{\bmusigma}{\bmusigmad}
\safemath{\bmuvarsigma}{\bmuvarsigmad}
\safemath{\bmutau}{\bmutaud}
\safemath{\bmutheta}{\bmuthetad}
\safemath{\bmuvartheta}{\bmuvarthetad}
\safemath{\bmuupsilon}{\bmuupsilond}
\safemath{\bmuxi}{\bmuxid}
\safemath{\bmuzeta}{\bmuzetad}

% boldface small italic math letters
\bmdefine{\bmiad}{a}
\bmdefine{\bmibd}{b}
\bmdefine{\bmicd}{c}
\bmdefine{\bmidd}{d}
\bmdefine{\bmied}{e}
\bmdefine{\bmifd}{f}
\bmdefine{\bmigd}{g}
\bmdefine{\bmihd}{h}
\bmdefine{\bmiid}{i}
\bmdefine{\bmijd}{j}
\bmdefine{\bmikd}{k}
\bmdefine{\bmild}{l}
\bmdefine{\bmimd}{m}
\bmdefine{\bmind}{n}
\bmdefine{\bmiod}{o}
\bmdefine{\bmipd}{p}
\bmdefine{\bmiqd}{q}
\bmdefine{\bmird}{r}
\bmdefine{\bmisd}{s}
\bmdefine{\bmitd}{t}
\bmdefine{\bmiud}{u}
\bmdefine{\bmivd}{v}
\bmdefine{\bmiwd}{w}
\bmdefine{\bmixd}{x}
\bmdefine{\bmiyd}{y}
\bmdefine{\bmizd}{z}

\bmdefine{\bmialphad}{\alpha}
\bmdefine{\bmibetad}{\beta}
\bmdefine{\bmichid}{\chi}
\bmdefine{\bmideltad}{\delta}
\bmdefine{\bmiepsilond}{\epsilon}
\bmdefine{\bmivarepsilond}{\varepsilon}
\bmdefine{\bmietad}{\eta}
\bmdefine{\bmigammad}{\gamma}
\bmdefine{\bmiiotad}{\iota}
\bmdefine{\bmikappad}{\kappa}
\bmdefine{\bmivarkappad}{\varkappa}
\bmdefine{\bmilambdad}{\lambda}
\bmdefine{\bmimud}{\mu}
\bmdefine{\bminud}{\nu}
\bmdefine{\bmiomegad}{\omega}
\bmdefine{\bmiphid}{\phi}
\bmdefine{\bmivarphid}{\varphi}
\bmdefine{\bmipid}{\pi}
\bmdefine{\bmivarpid}{\varpi}
\bmdefine{\bmipsid}{\psi}
\bmdefine{\bmirhod}{\rho}
\bmdefine{\bmivarrhod}{\varrho}
\bmdefine{\bmisigmad}{\sigma}
\bmdefine{\bmivarsigmad}{\varsigma}
\bmdefine{\bmitaud}{\tau}
\bmdefine{\bmithetad}{\theta}
\bmdefine{\bmivarthetad}{\vartheta}
\bmdefine{\bmiupsilond}{\upsilon}
\bmdefine{\bmixid}{\xi}
\bmdefine{\bmizetad}{\zeta}

%----------------------
\safemath{\bmia}{\bmiad}
\safemath{\bmib}{\bmibd}
\safemath{\bmic}{\bmicd}
\safemath{\bmid}{\bmidd}
\safemath{\bmie}{\bmied}
\safemath{\bmif}{\bmifd}
\safemath{\bmig}{\bmigd}
\safemath{\bmih}{\bmihd}
\safemath{\bmii}{\bmiid}
\safemath{\bmij}{\bmijd}
\safemath{\bmik}{\bmikd}
\safemath{\bmil}{\bmild}
\safemath{\bmim}{\bmimd}
\safemath{\bmin}{\bmind}
\safemath{\bmio}{\bmiod}
\safemath{\bmip}{\bmipd}
\safemath{\bmiq}{\bmiqd}
\safemath{\bmir}{\bmird}
\safemath{\bmis}{\bmisd}
\safemath{\bmit}{\bmitd}
\safemath{\bmiu}{\bmiud}
\safemath{\bmiv}{\bmivd}
\safemath{\bmiw}{\bmiwd}
\safemath{\bmix}{\bmixd}
\safemath{\bmiy}{\bmiyd}
\safemath{\bmiz}{\bmizd}

\safemath{\bmialpha}{\bmialphad}
\safemath{\bmibeta}{\bmibetad}
\safemath{\bmichi}{\bmichid}
\safemath{\bmidelta}{\bmideltad}
\safemath{\bmiepsilon}{\bmiepsilond}
\safemath{\bmivarepsilon}{\bmivarepsilond}
\safemath{\bmieta}{\bmietad}
\safemath{\bmigamma}{\bmigammad}
\safemath{\bmiiota}{\bmiiotad}
\safemath{\bmikappa}{\bmikappad}
\safemath{\bmivarkappa}{\bmivarkappad}
\safemath{\bmilambda}{\bmilambdad}
\safemath{\bmimu}{\bmimud}
\safemath{\bminu}{\bminud}
\safemath{\bmiomega}{\bmiomegad}
\safemath{\bmiphi}{\bmiphid}
\safemath{\bmivarphi}{\bmivarphid}
\safemath{\bmipi}{\bmipid}
\safemath{\bmivarpi}{\bmivarpid}
\safemath{\bmipsi}{\bmipsid}
\safemath{\bmirho}{\bmirhod}
\safemath{\bmivarrho}{\bmivarrhod}
\safemath{\bmisigma}{\bmisigmad}
\safemath{\bmivarsigma}{\bmivarsigmad}
\safemath{\bmitau}{\bmitaud}
\safemath{\bmitheta}{\bmithetad}
\safemath{\bmivartheta}{\bmivarthetad}
\safemath{\bmiupsilon}{\bmiupsilond}
\safemath{\bmixi}{\bmixid}
\safemath{\bmizeta}{\bmizetad}

% boldface capital upright math letters
\bmdefine{\bmuDeltad}{\Updelta}
\bmdefine{\bmuGammad}{\Upgamma}
\bmdefine{\bmuLambdad}{\Uplambda}
\bmdefine{\bmuOmegad}{\Upomega}
\bmdefine{\bmuPhid}{\Upphi}
\bmdefine{\bmuPid}{\Uppi}
\bmdefine{\bmuPsid}{\Uppsi}
\bmdefine{\bmuSigmad}{\Upsigma}
\bmdefine{\bmuThetad}{\Uptheta}
\bmdefine{\bmuUpsilond}{\Upupsilon}
\bmdefine{\bmuXid}{\Upxi}

%---------------------------
\safemath{\bmuA}{\mathbf{A}}
\safemath{\bmuB}{\mathbf{B}}
\safemath{\bmuC}{\mathbf{C}}
\safemath{\bmuD}{\mathbf{D}}
\safemath{\bmuE}{\mathbf{E}}
\safemath{\bmuF}{\mathbf{F}}
\safemath{\bmuG}{\mathbf{G}}
\safemath{\bmuH}{\mathbf{H}}
\safemath{\bmuI}{\mathbf{I}}
\safemath{\bmuJ}{\mathbf{J}}
\safemath{\bmuK}{\mathbf{K}}
\safemath{\bmuL}{\mathbf{L}}
\safemath{\bmuM}{\mathbf{M}}
\safemath{\bmuN}{\mathbf{N}}
\safemath{\bmuO}{\mathbf{O}}
\safemath{\bmuP}{\mathbf{P}}
\safemath{\bmuQ}{\mathbf{Q}}
\safemath{\bmuR}{\mathbf{R}}
\safemath{\bmuS}{\mathbf{S}}
\safemath{\bmuT}{\mathbf{T}}
\safemath{\bmuU}{\mathbf{U}}
\safemath{\bmuV}{\mathbf{V}}
\safemath{\bmuW}{\mathbf{W}}
\safemath{\bmuX}{\mathbf{X}}
\safemath{\bmuY}{\mathbf{Y}}
\safemath{\bmuZ}{\mathbf{Z}}

\safemath{\bmuZero}{\mathbf{0}}
\safemath{\bmuOne}{\mathbf{1}}

\safemath{\bmuDelta}{\bmuDeltad}
\safemath{\bmuGamma}{\bmuGammad}
\safemath{\bmuLambda}{\bmuLambdad}
\safemath{\bmuOmega}{\bmuOmegad}
\safemath{\bmuPhi}{\bmuPhid}
\safemath{\bmuPi}{\bmuPid}
\safemath{\bmuPsi}{\bmuPsid}
\safemath{\bmuSigma}{\bmuSigmad}
\safemath{\bmuTheta}{\bmuThetad}
\safemath{\bmuUpsilon}{\bmuUpsilond}
\safemath{\bmuXi}{\bmuXid}

% boldface capital italic math letters
\bmdefine{\bmiAd}{A}
\bmdefine{\bmiBd}{B}
\bmdefine{\bmiCd}{C}
\bmdefine{\bmiDd}{D}
\bmdefine{\bmiEd}{E}
\bmdefine{\bmiFd}{F}
\bmdefine{\bmiGd}{G}
\bmdefine{\bmiHd}{H}
\bmdefine{\bmiId}{I}
\bmdefine{\bmiJd}{J}
\bmdefine{\bmiKd}{K}
\bmdefine{\bmiLd}{L}
\bmdefine{\bmiMd}{M}
\bmdefine{\bmiOd}{N}
\bmdefine{\bmiPd}{O}
\bmdefine{\bmiQd}{P}
\bmdefine{\bmiRd}{R}
\bmdefine{\bmiSd}{S}
\bmdefine{\bmiTd}{T}
\bmdefine{\bmiUd}{U}
\bmdefine{\bmiVd}{V}
\bmdefine{\bmiWd}{W}
\bmdefine{\bmiXd}{X}
\bmdefine{\bmiYd}{Y}
\bmdefine{\bmiZd}{Z}

\bmdefine{\bmiDeltad}{\Delta}
\bmdefine{\bmiGammad}{\Gamma}
\bmdefine{\bmiLambdad}{\Lambda}
\bmdefine{\bmiOmegad}{\Omega}
\bmdefine{\bmiPhid}{\Phi}
\bmdefine{\bmiPid}{\Pi}
\bmdefine{\bmiPsid}{\Psi}
\bmdefine{\bmiSigmad}{\Sigma}
\bmdefine{\bmiThetad}{\Theta}
\bmdefine{\bmiUpsilond}{\Upsilon}
\bmdefine{\bmiXid}{\Xi}

%-----------------------
\safemath{\bmiA}{\bmiAd}
\safemath{\bmiB}{\bmiBd}
\safemath{\bmiC}{\bmiCd}
\safemath{\bmiD}{\bmiDd}
\safemath{\bmiE}{\bmiEd}
\safemath{\bmiF}{\bmiFd}
\safemath{\bmiG}{\bmiGd}
\safemath{\bmiH}{\bmiHd}
\safemath{\bmiI}{\bmiId}
\safemath{\bmiJ}{\bmiJd}
\safemath{\bmiK}{\bmiKd}
\safemath{\bmiL}{\bmiLd}
\safemath{\bmiM}{\bmiMd}
\safemath{\bmiN}{\bmiNd}
\safemath{\bmiO}{\bmiOd}
\safemath{\bmiP}{\bmiPd}
\safemath{\bmiQ}{\bmiQd}
\safemath{\bmiR}{\bmiRd}
\safemath{\bmiS}{\bmiSd}
\safemath{\bmiT}{\bmiTd}
\safemath{\bmiU}{\bmiUd}
\safemath{\bmiV}{\bmiVd}
\safemath{\bmiW}{\bmiWd}
\safemath{\bmiX}{\bmiXd}
\safemath{\bmiY}{\bmiYd}
\safemath{\bmiZ}{\bmiZd}

\safemath{\bmiDelta}{\bmiDeltad}
\safemath{\bmiGamma}{\bmiGammad}
\safemath{\bmiLambda}{\bmiLambdad}
\safemath{\bmiOmega}{\bmiOmegad}
\safemath{\bmiPhi}{\bmiPhid}
\safemath{\bmiPi}{\bmiPid}
\safemath{\bmiPsi}{\bmiPsid}
\safemath{\bmiSigma}{\bmiSigmad}
\safemath{\bmiTheta}{\bmiThetad}
\safemath{\bmiUpsilon}{\bmiUpsilond}
\safemath{\bmiXi}{\bmiXid}

%***********************************************************************

%%%%% Sets %%%%%%%%%%%%%%%%%%
\safemath{\setA}{\mathcal{A}}
\safemath{\setB}{\mathcal{B}}
\safemath{\setC}{\mathcal{C}}
\safemath{\setD}{\mathcal{D}}
\safemath{\setE}{\mathcal{E}}
\safemath{\setF}{\mathcal{F}}
\safemath{\setG}{\mathcal{G}}
\safemath{\setH}{\mathcal{H}}
\safemath{\setI}{\mathcal{I}}
\safemath{\setJ}{\mathcal{J}}
\safemath{\setK}{\mathcal{K}}
\safemath{\setL}{\mathcal{L}}
\safemath{\setM}{\mathcal{M}}
\safemath{\setN}{\mathcal{N}}
\safemath{\setO}{\mathcal{O}}
\safemath{\setP}{\mathcal{P}}
\safemath{\setQ}{\mathcal{Q}}
\safemath{\setR}{\mathcal{R}}
\safemath{\setS}{\mathcal{S}}
\safemath{\setT}{\mathcal{T}}
\safemath{\setU}{\mathcal{U}}
\safemath{\setV}{\mathcal{V}}
\safemath{\setW}{\mathcal{W}}
\safemath{\setX}{\mathcal{X}}
\safemath{\setY}{\mathcal{Y}}
\safemath{\setZ}{\mathcal{Z}}
\safemath{\emptySet}{\varnothing}

%%%%% Collections %%%%%%%%%%%%%%%%%%
\safemath{\colA}{\mathscr{A}}
\safemath{\colB}{\mathscr{B}}
\safemath{\colC}{\mathscr{C}}
\safemath{\colD}{\mathscr{D}}
\safemath{\colE}{\mathscr{E}}
\safemath{\colF}{\mathscr{F}}
\safemath{\colG}{\mathscr{G}}
\safemath{\colH}{\mathscr{H}}
\safemath{\colI}{\mathscr{I}}
\safemath{\colJ}{\mathscr{J}}
\safemath{\colK}{\mathscr{K}}
\safemath{\colL}{\mathscr{L}}
\safemath{\colM}{\mathscr{M}}
\safemath{\colN}{\mathscr{N}}
\safemath{\colO}{\mathscr{O}}
\safemath{\colP}{\mathscr{P}}
\safemath{\colQ}{\mathscr{Q}}
\safemath{\colR}{\mathscr{R}}
\safemath{\colS}{\mathscr{S}}
\safemath{\colT}{\mathscr{T}}
\safemath{\colU}{\mathscr{U}}
\safemath{\colV}{\mathscr{V}}
\safemath{\colW}{\mathscr{W}}
\safemath{\colX}{\mathscr{X}}
\safemath{\colY}{\mathscr{Y}}
\safemath{\colZ}{\mathscr{Z}}

%%%%% Operators %%%%%%%%%%%
\safemath{\opA}{\mathbb{A}}
\safemath{\opB}{\mathbb{B}}
\safemath{\opC}{\mathbb{C}}
\safemath{\opD}{\mathbb{D}}
\safemath{\opE}{\mathbb{E}}
\safemath{\opF}{\mathbb{F}}
\safemath{\opG}{\mathbb{G}}
\safemath{\opH}{\mathbb{H}}
\safemath{\opI}{\mathbb{I}}
\safemath{\opJ}{\mathbb{J}}
\safemath{\opK}{\mathbb{K}}
\safemath{\opL}{\mathbb{L}}
\safemath{\opM}{\mathbb{M}}
\safemath{\opN}{\mathbb{N}}
\safemath{\opO}{\mathbb{O}}
\safemath{\opP}{\mathbb{P}}
\safemath{\opQ}{\mathbb{Q}}
\safemath{\opR}{\mathbb{R}}
\safemath{\opS}{\mathbb{S}}
\safemath{\opT}{\mathbb{T}}
\safemath{\opU}{\mathbb{U}}
\safemath{\opV}{\mathbb{V}}
\safemath{\opW}{\mathbb{W}}
\safemath{\opX}{\mathbb{X}}
\safemath{\opY}{\mathbb{Y}}
\safemath{\opZ}{\mathbb{Z}}
\safemath{\opZero}{\mathbb{O}}
\safemath{\identityop}{\opI}

%%%%%%%%%%%% Scalars %%%%%%%%%%%%%%%%%%%

\safemath{\sca}{a}
\safemath{\scb}{b}
\safemath{\scc}{c}
\safemath{\scd}{d}
\safemath{\sce}{e}
\safemath{\scf}{f}
\safemath{\scg}{g}
\safemath{\sch}{h}
\safemath{\sci}{i}
\safemath{\scj}{j}
\safemath{\sck}{k}
\safemath{\scl}{l}
\safemath{\scm}{m}
\safemath{\scn}{n}
\safemath{\sco}{o}
\safemath{\scp}{p}
\safemath{\scq}{q}
\safemath{\scr}{r}
\safemath{\scs}{s}
\safemath{\sct}{t}
\safemath{\scu}{u}
\safemath{\scv}{v}
\safemath{\scw}{w}
\safemath{\scx}{x}
\safemath{\scy}{y}
\safemath{\scz}{z}

\safemath{\scA}{A}
\safemath{\scB}{B}
\safemath{\scC}{C}
\safemath{\scD}{D}
\safemath{\scE}{E}
\safemath{\scF}{F}
\safemath{\scG}{G}
\safemath{\scH}{H}
\safemath{\scI}{I}
\safemath{\scJ}{J}
\safemath{\scK}{K}
\safemath{\scL}{L}
\safemath{\scM}{M}
\safemath{\scN}{N}
\safemath{\scO}{O}
\safemath{\scP}{P}
\safemath{\scQ}{Q}
\safemath{\scR}{R}
\safemath{\scS}{S}
\safemath{\scT}{T}
\safemath{\scU}{U}
\safemath{\scV}{V}
\safemath{\scW}{W}
\safemath{\scX}{X}
\safemath{\scY}{Y}
\safemath{\scZ}{Z}

\safemath{\scalpha}{\alpha}
\safemath{\scbeta}{\beta}
\safemath{\scchi}{\chi}
\safemath{\scdelta}{\delta}
\safemath{\scepsilon}{\epsilon}
\safemath{\scvarepsilon}{\varepsilon}
\safemath{\sceta}{\eta}
\safemath{\scgamma}{\gamma}
\safemath{\sciota}{\iota}
\safemath{\sckappa}{\kappa}
\safemath{\scvarkappa}{\varkappa}
\safemath{\sclambda}{\lambda}
\safemath{\scmu}{\mu}
\safemath{\scnu}{\nu}
\safemath{\scomega}{\omega}
\safemath{\scphi}{\phi}
\safemath{\scvarphi}{\varphi}
\safemath{\scpi}{\pi}
\safemath{\scvarpi}{\varpi}
\safemath{\scpsi}{\psi}
\safemath{\scrho}{\rho}
\safemath{\scvarrho}{\varrho}
\safemath{\scsigma}{\sigma}
\safemath{\scvarsigma}{\varsigma}
\safemath{\sctau}{\tau}
\safemath{\sctheta}{\theta}
\safemath{\scvartheta}{\vartheta}
\safemath{\scupsilon}{\upsilon}
\safemath{\scxi}{\xi}
\safemath{\sczeta}{\zeta}

%%%%%%%%%%%% Vectors %%%%%%%%%%%%%%%

\safemath{\veca}{\mathbf{a}}
\safemath{\vecb}{\mathbf{b}}
\safemath{\vecc}{\mathbf{c}}
\safemath{\vecd}{\mathbf{d}}
\safemath{\vece}{\mathbf{e}}
\safemath{\vecf}{\mathbf{f}}
\safemath{\vecg}{\mathbf{g}}
\safemath{\vech}{\mathbf{h}}
\safemath{\veci}{\mathbf{i}}
\safemath{\vecj}{\mathbf{j}}
\safemath{\veck}{\mathbf{k}}
\safemath{\vecl}{\mathbf{l}}
\safemath{\vecm}{\mathbf{m}}
\safemath{\vecn}{\mathbf{n}}
\safemath{\veco}{\mathbf{o}}
\safemath{\vecp}{\mathbf{p}}
\safemath{\vecq}{\mathbf{q}}
\safemath{\vecr}{\mathbf{r}}
\safemath{\vecs}{\mathbf{s}}
\safemath{\vect}{\mathbf{t}}
\safemath{\vecu}{\mathbf{u}}
\safemath{\vecv}{\mathbf{v}}
\safemath{\vecw}{\mathbf{w}}
\safemath{\vecx}{\mathbf{x}}
\safemath{\vecy}{\mathbf{y}}
\safemath{\vecz}{\mathbf{z}}
\safemath{\veczero}{\mathbf{0}}
\safemath{\vecone}{\mathbf{1}}

\safemath{\vecalpha}{\upalpha}
\safemath{\vecbeta}{\upbeta}
\safemath{\vecchi}{\upchi}
\safemath{\vecdelta}{\updelta}
\safemath{\vecepsilon}{\upepsilon}
\safemath{\vecvarepsilon}{\upvarepsilon}
\safemath{\veceta}{\upeta}
\safemath{\vecgamma}{\upgamma}
\safemath{\veciota}{\upiota}
\safemath{\veckappa}{\upkappa}
\safemath{\veclambda}{\uplambda}
\safemath{\vecmu}{\text{\textmu}}
\safemath{\vecnu}{\upnu}
\safemath{\vecomega}{\upomega}
\safemath{\vecphi}{\upphi}
\safemath{\vecvarphi}{\upvarphi}
\safemath{\vecpi}{\uppi}
\safemath{\vecvarpi}{\upvarpi}
\safemath{\vecpsi}{\uppsi}
\safemath{\vecrho}{\uprho}
\safemath{\vecvarrho}{\upvarrho}
\safemath{\vecsigma}{\upsigma}
\safemath{\vecvarsigma}{\upvarsigma}
\safemath{\vectau}{\uptau}
\safemath{\vectheta}{\uptheta}
\safemath{\vecvartheta}{\upvartheta}
\safemath{\vecupsilon}{\upupsilon}
\safemath{\vecxi}{\upxi}
\safemath{\veczeta}{\upzeta}

%%%%%%%%%%%% Vector components %%%%%%%%%%%%%%%

\safemath{\vecac}{a}
\safemath{\vecbc}{b}
\safemath{\veccc}{c}
\safemath{\vecdc}{d}
\safemath{\vecec}{e}
\safemath{\vecfc}{f}
\safemath{\vecgc}{g}
\safemath{\vechc}{h}
\safemath{\vecic}{i}
\safemath{\vecjc}{j}
\safemath{\veckc}{k}
\safemath{\veclc}{l}
\safemath{\vecmc}{m}
\safemath{\vecnc}{n}
\safemath{\vecoc}{o}
\safemath{\vecpc}{p}
\safemath{\vecqc}{q}
\safemath{\vecrc}{r}
\safemath{\vecsc}{s}
\safemath{\vectc}{t}
\safemath{\vecuc}{u}
\safemath{\vecvc}{v}
\safemath{\vecwc}{w}
\safemath{\vecxc}{x}
\safemath{\vecyc}{y}
\safemath{\veczc}{z}

%%%%%%%%%%%%%%% Matrices %%%%%%%%%%%%%%%

\safemath{\matA}{\mathbf{A}}
\safemath{\matB}{\mathbf{B}}
\safemath{\matC}{\mathbf{C}}
\safemath{\matD}{\mathbf{D}}
\safemath{\matE}{\mathbf{E}}
\safemath{\matF}{\mathbf{F}}
\safemath{\matG}{\mathbf{G}}
\safemath{\matH}{\mathbf{H}}
\safemath{\matI}{\mathbf{I}}
\safemath{\matJ}{\mathbf{J}}
\safemath{\matK}{\mathbf{K}}
\safemath{\matL}{\mathbf{L}}
\safemath{\matM}{\mathbf{M}}
\safemath{\matN}{\mathbf{N}}
\safemath{\matO}{\mathbf{O}}
\safemath{\matP}{\mathbf{P}}
\safemath{\matQ}{\mathbf{Q}}
\safemath{\matR}{\mathbf{R}}
\safemath{\matS}{\mathbf{S}}
\safemath{\matT}{\mathbf{T}}
\safemath{\matU}{\mathbf{U}}
\safemath{\matV}{\mathbf{V}}
\safemath{\matW}{\mathbf{W}}
\safemath{\matX}{\mathbf{X}}
\safemath{\matY}{\mathbf{Y}}
\safemath{\matZ}{\mathbf{Z}}
\safemath{\matzero}{\mathbf{0}}

\safemath{\matDelta}{\Updelta}
\safemath{\matGamma}{\Upgammma}
\safemath{\matLambda}{\Uplambda}
\safemath{\matOmega}{\Upomega}
\safemath{\matPhi}{\Upphi}
\safemath{\matPi}{\Uppi}
\safemath{\matPsi}{\Uppsi}
\safemath{\matSigma}{\Upsigma}
\safemath{\matTheta}{\Uptheta}
\safemath{\matUpsilon}{\Upupsilon}
\safemath{\matXi}{\Upxi}

\safemath{\matidentity}{\matI}
\safemath{\matone}{\matO}

%%%%%% Matrix components %%%%%%
\safemath{\matAc}{a}
\safemath{\matBc}{b}
\safemath{\matCc}{c}
\safemath{\matDc}{d}
\safemath{\matEc}{e}
\safemath{\matFc}{f}
\safemath{\matGc}{g}
\safemath{\matHc}{h}
\safemath{\matIc}{i}
\safemath{\matJc}{j}
\safemath{\matKc}{k}
\safemath{\matLc}{l}
\safemath{\matMc}{m}
\safemath{\matNc}{n}
\safemath{\matOc}{o}
\safemath{\matPc}{p}
\safemath{\matQc}{q}
\safemath{\matRc}{r}
\safemath{\matSc}{s}
\safemath{\matTc}{t}
\safemath{\matUc}{u}
\safemath{\matVc}{v}
\safemath{\matWc}{w}
\safemath{\matXc}{x}
\safemath{\matYc}{y}
\safemath{\matZc}{z}

%%%%%% Random scalars, vectors and matrices %%%%%%%%

% random scalars
\safemath{\rnda}{\bmia}
\safemath{\rndb}{\bmib}
\safemath{\rndc}{\bmic}
\safemath{\rndd}{\bmid}
\safemath{\rnde}{\bmie}
\safemath{\rndf}{\bmif}
\safemath{\rndg}{\bmig}
\safemath{\rndh}{\bmih}
\safemath{\rndi}{\bmii}
\safemath{\rndj}{\bmij}
\safemath{\rndk}{\bmik}
\safemath{\rndl}{\bmil}
\safemath{\rndm}{\bmim}
\safemath{\rndn}{\bmin}
\safemath{\rndo}{\bmio}
\safemath{\rndp}{\bmip}
\safemath{\rndq}{\bmiq}
\safemath{\rndr}{\bmir}
\safemath{\rnds}{\bmis}
\safemath{\rndt}{\bmit}
\safemath{\rndu}{\bmiu}
\safemath{\rndv}{\bmiv}
\safemath{\rndw}{\bmiw}
\safemath{\rndx}{\bmix}
\safemath{\rndy}{\bmiy}
\safemath{\rndz}{\bmiz}

\safemath{\rndA}{\bmiA}
\safemath{\rndB}{\bmiB}
\safemath{\rndC}{\bmiC}
\safemath{\rndD}{\bmiD}
\safemath{\rndE}{\bmiE}
\safemath{\rndF}{\bmiF}
\safemath{\rndG}{\bmiG}
\safemath{\rndH}{\bmiH}
\safemath{\rndI}{\bmiI}
\safemath{\rndJ}{\bmiJ}
\safemath{\rndK}{\bmiK}
\safemath{\rndL}{\bmiL}
\safemath{\rndM}{\bmiM}
\safemath{\rndN}{\bmiN}
\safemath{\rndO}{\bmiO}
\safemath{\rndP}{\bmiP}
\safemath{\rndQ}{\bmiQ}
\safemath{\rndR}{\bmiR}
\safemath{\rndS}{\bmiS}
\safemath{\rndT}{\bmiT}
\safemath{\rndU}{\bmiU}
\safemath{\rndV}{\bmiV}
\safemath{\rndW}{\bmiW}
\safemath{\rndX}{\bmiX}
\safemath{\rndY}{\bmiY}
\safemath{\rndZ}{\bmiZ}

\safemath{\rndalpha}{\bmialpha}
\safemath{\rndbeta}{\bmibeta}
\safemath{\rndchi}{\bmichi}
\safemath{\rnddelta}{\bmidelta}
\safemath{\rndepsilon}{\bmiepsilon}
\safemath{\rndvarepsilon}{\bmivarepsilon}
\safemath{\rndeta}{\bmieta}
\safemath{\rndgamma}{\bmigamma}
\safemath{\rndiota}{\bmiiota}
\safemath{\rndkappa}{\bmikappa}
\safemath{\rndlambda}{\bmilambda}
\safemath{\rndmu}{\bmimu}
\safemath{\rndnu}{\bminu}
\safemath{\rndomega}{\bmiomega}
\safemath{\rndphi}{\bmiphi}
\safemath{\rndvarphi}{\bmivarphi}
\safemath{\rndpi}{\bmipi}
\safemath{\rndvarpi}{\bmivarpi}
\safemath{\rndpsi}{\bmipsi}
\safemath{\rndrho}{\bmirho}
\safemath{\rndvarrho}{\bmivarrho}
\safemath{\rndsigma}{\bmisigma}
\safemath{\rndvarsigma}{\bmivarsigma}
\safemath{\rndtau}{\bmitau}
\safemath{\rndtheta}{\bmitheta}
\safemath{\rndvartheta}{\bmivartheta}
\safemath{\rndupsilon}{\bmiupsilon}
\safemath{\rndxi}{\bmixi}
\safemath{\rndzeta}{\bmizeta}

% random vectors
\safemath{\rveca}{\bmua}
\safemath{\rvecb}{\bmub}
\safemath{\rvecc}{\bmuc}
\safemath{\rvecd}{\bmud}
\safemath{\rvece}{\bmue}
\safemath{\rvecf}{\bmuf}
\safemath{\rvecg}{\bmug}
\safemath{\rvech}{\bmuh}
\safemath{\rveci}{\bmui}
\safemath{\rvecj}{\bmuj}
\safemath{\rveck}{\bmuk}
\safemath{\rvecl}{\bmul}
\safemath{\rvecm}{\bmum}
\safemath{\rvecn}{\bmun}
\safemath{\rveco}{\bmuo}
\safemath{\rvecp}{\bmup}
\safemath{\rvecq}{\bmuq}
\safemath{\rvecr}{\bmur}
\safemath{\rvecs}{\bmus}
\safemath{\rvect}{\bmut}
\safemath{\rvecu}{\bmuu}
\safemath{\rvecv}{\bmuv}
\safemath{\rvecw}{\bmuw}
\safemath{\rvecx}{\bmux}
\safemath{\rvecy}{\bmuy}
\safemath{\rvecz}{\bmuz}

\safemath{\rvecalpha}{\bmualpha}
\safemath{\rvecbeta}{\bmubeta}
\safemath{\rvecchi}{\bmuchi}
\safemath{\rvecdelta}{\bmudelta}
\safemath{\rvecepsilon}{\bmuepsilon}
\safemath{\rvecvarepsilon}{\bmuvarepsilon}
\safemath{\rveceta}{\bmueta}
\safemath{\rvecgamma}{\bmugamma}
\safemath{\rveciota}{\bmuiota}
\safemath{\rveckappa}{\bmukappa}
\safemath{\rveclambda}{\bmulambda}
\safemath{\rvecmu}{\bmumu}
\safemath{\rvecnu}{\bmunu}
\safemath{\rvecomega}{\bmuomega}
\safemath{\rvecphi}{\bmuphi}
\safemath{\rvecvarphi}{\bmuvarphi}
\safemath{\rvecpi}{\bmupi}
\safemath{\rvecvarpi}{\bmuvarpi}
\safemath{\rvecpsi}{\bmupsi}
\safemath{\rvecrho}{\bmurho}
\safemath{\rvecvarrho}{\bmuvarrho}
\safemath{\rvecsigma}{\bmusigma}
\safemath{\rvecvarsigma}{\bmuvarsigma}
\safemath{\rvectau}{\bmutau}
\safemath{\rvectheta}{\bmutheta}
\safemath{\rvecvartheta}{\bmuvartheta}
\safemath{\rvecupsilon}{\bmuupsilon}
\safemath{\rvecxi}{\bmuxi}
\safemath{\rveczeta}{\bmuzeta}

% random matrices
\safemath{\rmatA}{\bmuA}
\safemath{\rmatB}{\bmuB}
\safemath{\rmatC}{\bmuC}
\safemath{\rmatD}{\bmuD}
\safemath{\rmatE}{\bmuE}
\safemath{\rmatF}{\bmuF}
\safemath{\rmatG}{\bmuG}
\safemath{\rmatH}{\bmuH}
\safemath{\rmatI}{\bmuI}
\safemath{\rmatJ}{\bmuJ}
\safemath{\rmatK}{\bmuK}
\safemath{\rmatL}{\bmuL}
\safemath{\rmatM}{\bmuM}
\safemath{\rmatN}{\bmuN}
\safemath{\rmatO}{\bmuO}
\safemath{\rmatP}{\bmuP}
\safemath{\rmatQ}{\bmuQ}
\safemath{\rmatR}{\bmuR}
\safemath{\rmatS}{\bmuS}
\safemath{\rmatT}{\bmuT}
\safemath{\rmatU}{\bmuU}
\safemath{\rmatV}{\bmuV}
\safemath{\rmatW}{\bmuW}
\safemath{\rmatX}{\bmuX}
\safemath{\rmatY}{\bmuY}
\safemath{\rmatZ}{\bmuZ}

\safemath{\rmatDelta}{\bmuDelta}
\safemath{\rmatGamma}{\bmuGamma}
\safemath{\rmatLambda}{\bmuLambda}
\safemath{\rmatOmega}{\bmuOmega}
\safemath{\rmatPhi}{\bmuPhi}
\safemath{\rmatPi}{\bmuPi}
\safemath{\rmatPsi}{\bmuPsi}
\safemath{\rmatSigma}{\bmuSigma}
\safemath{\rmatTheta}{\bmuTheta}
\safemath{\rmatUpsilon}{\bmuUpsilon}
\safemath{\rmatXi}{\bmuXi}

%%%%% New environments %%%%%%%

% tighter spacing for a inline matrix 
\newenvironment{textbmatrix}{	\setlength{\arraycolsep}{2.5pt}%
								\big[\begin{matrix}}{\end{matrix}\big]%
								\raisebox{0.08ex}{\vphantom{M}}}

%%%%% Short forms for environments %%%%%

% \def\be{\begin{equation}}
% \def\ee{\end{equation}}
% \def\een{\nonumber \end{equation}}
% \def\mat{\begin{bmatrix}}
% \def\emat{\end{bmatrix}}
 \def\btm{\begin{textbmatrix}}
 \def\etm{\end{textbmatrix}}

% \def\ba#1\ea{\begin{align}#1\end{align}}
% \def\bas#1\eas{\begin{align*}#1\end{align*}}
% \def\bs#1\es{\begin{split}#1\end{split}} 
% \def\bg#1\eg{\begin{gather}#1\end{gather}}
% \def\bml#1\eml{\begin{multline}#1\end{multline}}
% \def\bi#1\ei{\begin{itemize}#1\end{itemize}} 
% \def\bipi#1\eipi{\begin{inparaitem}#1\end{inparaitem}}

%%%%% Useful abbreviations %%%%%

%\newcommand{\safemath}[2]{\newcommand{#1}{\ensuremath{#2}\xspace}}
% now in vmr-symbols-*.tex

\newcommand{\lefto}{\mathopen{}\left}

 % for text mode subscripts

%%%%% for operator alignment in broken equations in the AMS align environment

%%%%% Operators %%%%%

				% trace
			% Trace
\DeclareMathOperator{\diag}{diag}			% diagonal matrix
%\DeclareMathOperator{\dg}{\opD}				% diagonal matrix
			% rank of a matrix
			% spark of a matrix
%\DeclareMathOperator{\ker}{ker}			% kernel of a matrix
\DeclareMathOperator{\adj}{adj}				% adjunct matrix
				% vectorization
 % Vec(X)
			% rect function
			% signum
			% sinc function
				% element of
%\DeclareMathOperator{\dom}{dom}				% domain of a function
%\DeclareMathOperator{\rng}{rng}				% range of a function
			% interior of a set
		% arg min
		% arg max
		% limit in the mean
\DeclareMathOperator{\kron}{\otimes}			% Kroneker Product
			% Hadamard Product
			% Fourier transform
\DeclareMathOperator{\Exop}{\opE}			% expectation operator
 % variance operator
% covariance operator
			% convolution operator
			 	% span of a set of vectors
		% closed linear span
			 	% convex hull
			% gradient
		% divergence
		% curl
			% alternate
				% error function
			% complementary error function
			% essential supremum

%%% Asymptotic notation

\DeclareMathOperator{\landauO}{\mathcal{O}}

%%%%%% General Math Macros %%%%%%%%%%%%%%
\safemath{\fun}{\scf}						% generic scalar function
\safemath{\altfun}{\scg}
\safemath{\aaltfun}{\sch}
\safemath{\bel}{\sce}					% basis element
\safemath{\altbel}{\sce}					% alternative basis element
\safemath{\frel}{g}					% frame element
\safemath{\altfrel}{g}					% alternative frame element
\safemath{\dfrel}{\tilde{g}}					% dual frame element
\safemath{\altdfrel}{\tilde{g}}					% alternative dual frame element

\safemath{\mat}{\matA}						% generic matrix
\safemath{\matc}{\matAc}						% components of a generic matrix
\safemath{\altmat}{\matB}						% alternative generic matrix
\safemath{\altmatc}{\matBc}						% alternative generic matrix

\safemath{\vectr}{\vecu}						% generic vector
\safemath{\vectrc}{\vecuc}						% components of a generic vector
\safemath{\altvectr}{\vecv}						% alternative generic vector
\safemath{\altvectrc}{\vecvc}						% components of an alternative generic vector

\newcommand{\nullspace}{\setN}	 			% nullspace
		 				% range
						% domain
					% orthogonal
\newcommand{\Ex}[2]{\ensuremath{\Exop_{#1}\lefto[#2\right]}} 	% expectation
 % variance
 % covariance
\newcommand{\abs}[1]{\left\lvert#1\right\rvert}		% absolute value
\newcommand{\card}[1]{\lvert#1\rvert}			% cardinality of a set
	% set complement
\newcommand{\union}{\cup}					% set union

				% set intersection

 		% power set
				% indicator function
%\safemath{\interior}{\mathrm{Int}}			% interior of a set
%\newcommand{\domain}[1]{\mathscr{D}(#1)}	 	% domain of a mapping
\newcommand{\vecnorm}[1]{\lVert#1\rVert}		% vector norm
	% Frobenius norm
		% operator norm
\newcommand{\conj}[1]{\ensuremath{\overline{#1}}} 	% conjugate 		
\newcommand{\tp}[1]{\ensuremath{#1^{\mathsf{T}}}} 		% transpose
 		% transpose
\newcommand{\herm}[1]{\ensuremath{#1^{\mathsf{H}}}} 	% hermitian transpose (Hilbert adjoint) //modified by Christoph Bunte
 		% adjoint operator
\newcommand{\inv}[1]{\ensuremath{#1^{-1}}} 	% inverse
 	% Moore-Penrose pseudo-inverse
	% square-root operator
 	% optimum parameter
 	% estimate
 % estimate

\safemath{\dirac}{\delta}					% Dirac delta
\safemath{\diracp}{\dirac(\time)}			% 	''	parametrized
\safemath{\krond}{\dirac}					% Kronecker delta
% all definition by Moritz
% all definition by Moritz
% all definition by Moritz
		 % generic sequence
 % indexed sequence
% finite sequence

	% indefinite integral
 	% ind. double integral
		% Fourier integral
	% inv. Four. integral
 % log det function
		% set notation
			% set closure
				% mod. Bessel fun. first kind
\safemath{\upto}{\uparrow}
\safemath{\downto}{\downarrow}
\safemath{\iu}{i}							% imaginary unit
\safemath{\maj}{\succ}
	% essential supremum
		% element of a matrix
			% identity matrix
			% all-zero matrix
\newcommand{\dftmat}[1]{\matF_{#1}}			% DFT matrix
\newcommand{\diagd}[1]{D(#1)}               % diagonalized matrix
\safemath{\mdft}{\dftmat{}}					% 	''
			% column of the DFT matrix
\safemath{\runity}{\beta}					% root of unity
\safemath{\eval}{\lambda}					% eigenvalue
\safemath{\veval}{\veclambda}				% eigenvalue vector
			% positive complex exponential
		% negative complex exponential
\safemath{\littleo}{\sco}					% Landau\s little o

\let\im\undefined
\safemath{\re}{\mathfrak{Re}}				% real part
\safemath{\im}{\mathfrak{Im}}				% imaginary part
 			% real part parametrized
			% imaginary part parametrized

%%%%%%% Symbols for Special Spaces %%%%%%%%%%
\safemath{\euclidspace}{\complexset}			% Euclidean space
\safemath{\confunspace}{\setC}				% space of continuous functions
\newcommand{\banachseqspace}[1]{l^{#1}}		% Banach sequence space
\safemath{\hilseqspace}{\banachseqspace{2}}	% Hilbert sequence space
\newcommand{\banachfunspace}[1]{\setL^{#1}}	% Banach function space
\safemath{\hilfunspace}{\banachfunspace{2}}	% Hilbert function space
		% Hardy space
\safemath{\schwarzspace}{\setS}				% Schwarz space

\newcommand{\hadj}[1]{#1^{\star}}			% Hilbert adjoint operator

%%%%%% Special Symbols for Communications %%%%%%%
%\safemath{\SNR}{\text{\sc snr}} 				% signal to noise ratio
\safemath{\SNR}{\rho} 				% signal to noise ratio
\safemath{\No}{N_0}							% noise spectral density
\safemath{\Es}{E_s}							% energy per symbol
\safemath{\Eb}{E_b}							% energy per bit
\safemath{\EbNo}{\frac{\Eb}{\No}}
\safemath{\EsNo}{\frac{\Es}{\No}}

% Time-frequency notation
\let\time\undefined
\safemath{\time}{\sct}						% continuous time
\safemath{\dtime}{\sck}						% discrete time
\safemath{\delay}{\sctau}					% continuous delay
\safemath{\ddelay}{\scl}						% discrete delay
\safemath{\doppler}{\scnu}					% continuous doppler
\safemath{\ddoppler}{\scm}					% discrete doppler
\safemath{\freq}{\scf}						% frequency
\safemath{\dfreq}{\scn}						% discrete frequency
\safemath{\Dtime}{\Delta\time}
\safemath{\Dfreq}{\Delta\freq}
\safemath{\Ddtime}{\Delta\dtime}
\safemath{\Ddfreq}{\Delta\dfreq}
\safemath{\bandwidth}{\scB}
\safemath{\maxdoppler}{\doppler_{0}}			% maximum Doppler shift
\safemath{\maxdelay}{\delay_{0}}				% maximum delay
\safemath{\spread}{\Delta_{\CHop}}			% total channel spread
\DeclareMathOperator{\CHop}{\ensuremath{\opH}} % channel operator
\safemath{\kernel}{\rndk_{\CHop}}			% operator kernel
\safemath{\kernelp}{\kernel(\time,\time')}	% 	''	parametrized
\safemath{\tvir}{\rndh_{\CHop}}				% time-varying impulse response
\safemath{\tvirp}{\tvir(\time,\delay)}		%	''	parametrized
\safemath{\tvirc}{\conj{\rndh}_{\CHop}}		% 	''	parametrized
\safemath{\tvtf}{\rndl_{\CHop}}				% time-varying transfer function
\safemath{\tvtfp}{\tvtf(\time,\freq)}			%	''	parametrized
\safemath{\tvtfc}{\conj{\rndl}_{\CHop}}		%	''	parametrized
\safemath{\spf}{\rnds_{\CHop}}				% spreading function
\safemath{\spfp}{\spf(\doppler,\delay)}		%	''	parametrized
\safemath{\spfc}{\conj{\rnds}_{\CHop}}		%	''	parametrized
\safemath{\bff}{\rndb_{\CHop}}				% bi-freuqency function
\safemath{\bffp}{\bff(\doppler,\freq)}		%	''	parametrized

% WSSUS correlation functions
\safemath{\irc}{\scr_{\rndh}}				% impulse response correlation fn.
\safemath{\tfc}{\scr_{\rndl}}				% time-frequency correlation fn.
\safemath{\spc}{\scr_{\rnds}}				% spreading fn. correlation fn.
\safemath{\bfc}{\scr_{\rndb}}				% bi-frequency correlation fn.

\safemath{\scaf}{\scc_{\rnds}}				% scattering function
\safemath{\scafp}{\scaf(\doppler,\delay)}		% 	''	parametrized
\safemath{\ccf}{\scc_{\rndl}}				% WSSUS tvtf correlation
\safemath{\ccfp}{\ccf(\Dtime,\Dfreq)}			% 	''	parametrized
\safemath{\cic}{\scc_{\rndh}}				% WSSUS tvir correlation
\safemath{\cicp}{\cic(\Dtime,\delay)}			% 	''	parametrized

\safemath{\mi}{\scI}							% mutual information
\safemath{\capacity}{\scC}					% capacity

\newcommand{\iid}{i.i.d.\@\xspace}

%%%%%% probability theory stuff %%%%%%
\DeclareMathOperator{\Prob}{\opP}		% probability of an event
\newcommand{\pdf}[1]{\scf_{#1}}			% probability density function
		% alternative pdf
			% probability mass function
		% alternative pmf
 			% probability distribution function
		% alternative cdf
			% moment generating function
				% entropy
\newcommand{\diffent}{\sch}				% differential entropy
 % relative entropy
\safemath{\normal}{\mathcal{N}}			% normal distribution
\safemath{\jpg}{\mathcal{CN}}			% jointly proper Gaussian
\safemath{\mchain}{\leftrightarrow}		% Markov chain
		% Chi-square RV
\newcommand{\given}{\,\vert\,}				% conditioning
 % Akaike's Information Criterion

%%%%% Unit Denominators %%%%%%%%%%
\safemath{\dB}{\,\mathrm{dB}}
\safemath{\dBm}{\,\mathrm{dBm}}
\safemath{\Hz}{\,\mathrm{Hz}}
\safemath{\kHz}{\,\mathrm{kHz}}
\safemath{\MHz}{\,\mathrm{MHz}}
\safemath{\GHz}{\,\mathrm{GHz}}
\safemath{\s}{\,\mathrm{s}}
\safemath{\ms}{\,\mathrm{ms}}
\safemath{\mus}{\,\mathrm{\text{\textmu}s}}
\safemath{\ns}{\,\mathrm{ns}}
\safemath{\ps}{\,\mathrm{ps}}
\safemath{\meter}{\,\mathrm{m}}
\safemath{\mm}{\,\mathrm{mm}}
\safemath{\cm}{\,\mathrm{cm}}
\safemath{\m}{\,\mathrm{m}}
\safemath{\W}{\,\mathrm{W}}
\safemath{\mW}{\, \mathrm{mW}}
\safemath{\J}{\,\mathrm{J}}
\safemath{\K}{\,\mathrm{K}}
\safemath{\bit}{\,\mathrm{bit}}
\safemath{\nat}{\,\mathrm{nat}}

%%%%% Binary Relations %%%%%

\safemath{\define}{\triangleq}					% definition

				% set notation
			% composition of mappings
% inner product <.,.>
\safemath{\equivalent}{\sim}
\safemath{\distas}{\sim}					% distributed according to
		% logical AND
		% logical OR
\safemath{\sdiff}{\Delta}				% symmetric set difference
\safemath{\setdiff}{\setminus}				% set difference

% Number sets
\safemath{\reals}{\mathbb R}
\safemath{\positivereals}{\reals_{+}}
\safemath{\integers}{\mathbb Z}
\safemath{\posint}{\integers_{+}}
\safemath{\naturals}{\mathbb N}
\safemath{\posnaturals}{\naturals_{+}}
\safemath{\complexset}{\mathbb C}
\safemath{\rationals}{\mathbb Q}

\newcommand{\natseg}[2]{[#1 \text{\phantom{\tiny{.}}:\phantom{\tiny{.}}} #2]} % segment of natural numbers
 % entry of a matrix or a vector
 % entry of a matrix or a vector
\newcommand{\matseg}[3]{\left[#1\right]_{ #2 , #3 }} % entry of a matrix or a vector
\newcommand{\vecseg}[2]{#1_{ #2}} % entry of a matrix or a vector
%\newcommand{\natseg}[3]{#1\in[#2\!:\!#3]} % entry of a matrix or a vector

%%% definitions for the fancyref  package
% prefixes
\newcommand*{\fancyrefparlabelprefix}{par}		% Part
\newcommand*{\fancyrefchalabelprefix}{cha}		% Chapter
\newcommand*{\fancyrefapplabelprefix}{app}		% Appendix
\newcommand*{\fancyrefthmlabelprefix}{thm}		% Theorem
\newcommand*{\fancyreflemlabelprefix}{lem}		% Lemma
\newcommand*{\fancyrefcorlabelprefix}{cor}		% Corolary
\newcommand*{\fancyrefdeflabelprefix}{dfn}		% Definition

% label names
\frefformat{vario}{\fancyrefparlabelprefix}{Part~#1}
\frefformat{vario}{\fancyrefchalabelprefix}{Chapter~#1}
\frefformat{vario}{\fancyrefseclabelprefix}{Section~#1}
\frefformat{vario}{\fancyrefthmlabelprefix}{Theorem~#1}
\frefformat{vario}{\fancyreflemlabelprefix}{Lemma~#1}
\frefformat{vario}{\fancyrefcorlabelprefix}{Corollary~#1}
\frefformat{vario}{\fancyrefdeflabelprefix}{Definition~#1}
\frefformat{vario}{\fancyreffiglabelprefix}{Figure~#1}
\frefformat{vario}{\fancyrefapplabelprefix}{Appendix~#1}
\frefformat{vario}{\fancyrefeqlabelprefix}{(#1)}

%functional analysis
\safemath{\iSet}{\setI}
			% power set
				% cartesian product
\safemath{\rel}{\bowtie}					% relation
\safemath{\eqrel}{\sim}					% equivalence relation
\safemath{\rlord}{\leq}					% reflexive linear ordering
\safemath{\slord}{<}						% strict linear ordering
\safemath{\rpord}{\preceq}				% reflexive partial ordering
\safemath{\rrpord}{\succeq}				% reversed reflexive partial ordering
\safemath{\spord}{\prec}					% strict partial ordering
\safemath{\sig}{\sigma}					% sigma-{algebra, ring,...}
\safemath{\metric}{d}					% metric
	% open ball

\safemath{\setfun}{\Phi}					% set function
\safemath{\measure}{\mu}					% measure
\safemath{\altmeasure}{\lambda}					% measure
				% almost everywhere
\newcommand{\outerm}[1]{#1^{\star}}		% marks outer measures.
\newcommand{\innerm}[1]{#1_{\star}}		% marks inner measures.
\safemath{\omeasure}{\outerm{\measure}}		% outer measure
\safemath{\imeasure}{\innerm{\measure}}		% inner measure	
\safemath{\aecol}{\colS^{\star}_{\measure}} % collection of almost equal sets
\safemath{\emeasure}{\bar{\measure}_{0}}	% measure extension
\safemath{\rmeasure}{\tilde{\measure}}	% restricted measure
\safemath{\bmeasure}{\measure_{0}}		% basic measure on a semiring
\safemath{\glength}{\measure_{\altfun}}	% generalized length
\safemath{\lebmea}{\lambda}				% Lebesgue length and measure
\safemath{\blebmea}{\lebmea_{0}}			% pre-lebesgue-measure
			% positive part of a function
			% negative part of a function
\safemath{\sfun}{s}						% simple function
\safemath{\absintspace}{\colL^{1}}		% space of abs. integrable functions
\safemath{\sqintspace}{\colL^{2}}		% space of square integrable functions
\safemath{\abssumspace}{l^{1}}		% space of abs. summable sequences
\safemath{\sqsumspace}{l^{2}}		% space of square summable sequences

\safemath{\sfield}{\setF}				% scalar field
\safemath{\vectors}{\setV}				% set of vectors
\safemath{\vecspace}{(\vectors,\sfield)}	% vector space
\safemath{\linspace}{\setV}				% linear space
\safemath{\clinspace}{(\linspace,\sfield)} % linear space
\safemath{\nspace}{\setU}				% normed space
\safemath{\metspace}{\setM}				% metric space
\safemath{\bspace}{\setB}				% Banach space
\safemath{\ipspace}{\setG}				% inner product space
\safemath{\hilspace}{\setH}				% Hilbert space
\safemath{\blospace}{\setG}				% set of bounded linear oprators
\safemath{\lop}{\opT}					% linear operator
\safemath{\altlop}{\opS}					% alternative linear operator
\safemath{\nullsp}{\nullspace(\lop)}		% null space of the linear operator
\safemath{\lfun}{l}						% linear functional
\safemath{\altlfun}{g}					% alternative linear functional
		% algebraic dual space
\newcommand{\dual}[1]{#1^{'}}			% dual space
		% orthogonal complement
\safemath{\dsum}{\oplus}					% direct sum
\safemath{\funspace}{\colL}				% function space
\renewcommand{\adj}[1]{#1^{\times}}		% adjoint operator generator
\safemath{\adjlop}{\adj{\lop}}			% adjoint operator
\safemath{\hadjlop}{\hadj{\lop}}			% Hilbert adjoint operator
\safemath{\tow}{\xrightarrow{w}}			% weak convergence
\safemath{\tows}{\xrightarrow{w^{*}}}		% weak* convergence
\safemath{\cparam}{\lambda}
\safemath{\lopl}{\lop_{\cparam}}		
\safemath{\iop}{\opI}					% identity operator
\safemath{\resolop}{\opR}				% resolvent operator
\safemath{\resolvent}{\resolop_{\cparam}(\lop)}	% resolvent operator
\safemath{\reset}{\setQ}
\safemath{\spectrum}{\setS}
\safemath{\resolset}{\reset(\lop)}		% resolvent set
\safemath{\lopspec}{\spectrum(\lop)}		% spectrum of a linear operator
\safemath{\pspec}{\spectrum_{p}(\lop)}	% point spectrum
\safemath{\cspec}{\spectrum_{c}(\lop)}	% continuous spectrum
\safemath{\rspec}{\spectrum_{r}(\lop)}	% residual spectrum
\safemath{\ev}{\cparam}					% eigenvalue
\newcommand{\specrad}[1]{r_{#1}}			% spectral radius
\safemath{\lopsrad}{\specrad{\lop}}		% spectral radius
\safemath{\pop}{\opP}					% projection operator

\safemath{\specfam}{\colE}				% spectral family
\safemath{\specop}{\opE_{\cparam}}		% spectral projection operator
\safemath{\altspecop}{\opE_{\mu}}		% alternat spectral projection operator
			% right limit
			% left limit
\safemath{\vmulti}{\vecone}				% vector multiplicative identity
	% Hilbert-Schmidt norm
%\safemath{\hilfunspace}{\funspace^{2}}	% Hilbert function space
%\safemath{\kernel}{k}					% integral kernel
\safemath{\unitaryop}{\opU}				% unitary operator
\safemath{\sval}{\sigma}					% singular value
	% trace norm
\safemath{\corrcoef}{\rho}				% canonical correlation coefficient
\safemath{\sangle}{\theta}				% angle between subspaces

%frames
\let\time\undefined
\safemath{\iset}{\setI}				% index set
\safemath{\shift}{\nu}
\safemath{\scale}{\alpha}
\safemath{\time}{t}
\safemath{\specfreq}{\theta}	
\newcommand{\transopgen}[1]{\opT_{#1}} % translation operator
\safemath{\transop}{\transopgen{\delay}}
\newcommand{\modopgen}[1]{\opM_{#1}}	% modulation operator
\safemath{\modop}{\modopgen{\shift}}
\newcommand{\dilopgen}[1]{\opD_{#1}}	% dilation operator
\safemath{\dilop}{\dilopgen{\scale}}
	% dual space
\safemath{\fram}{\setF}				% frame
\safemath{\dfram}{\dual{\fram}}		% dual frame
\safemath{\ufb}{B}					% upper frame bound
\safemath{\lfb}{A}					% lower frame bound
\safemath{\sop}{\hadj{\aop}}				% frame synthesis operator
\safemath{\aop}{\opT}			% frame analysis operator 		// modifide by christoph bunte
\safemath{\fop}{\opS}				% frame operator 			// modified by christoph bunte 
\safemath{\daop}{\tilde\opT}			% dual frame analysis operator 		// added by christoph bunte
\safemath{\dsop}{\hadj{\tilde\opT}}				% dual frame synthesis operator 			// added by christoph bunte 

\safemath{\ifop}{\inv{\fop}}			% inverse frame operator
\safemath{\rifop}{\fop^{-1/2}}			% square root of inverse frame operator
	% Fourier transform
\safemath{\transeq}{\setT}			% sequence of translates
\safemath{\nfun}{\Phi}				% ``Nyquist series''
\safemath{\funvec}{\vecf}			%  function as a vector
\safemath{\altfunvec}{\vecg}

%probability
\safemath{\samplespace}{\Omega}
\safemath{\probspace}{(\samplespace,\sfield,\Prob)}	% probability space
\safemath{\ccoef}{\rho}			% correlation coefficient
% discrete-time random processes of infinite duration

% Markov chain stuff
		% n-step transition probability
		% n-step transition probability matrix
			% accessible states
		% communicating states
				% period of an MC
	% state vector
\safemath{\infstate}{\vecpi}				% steady state vector
\safemath{\typset}{\setA_{\epsilon}^{(N)}}	% typical set
\safemath{\expequal}{\doteq}				% equal to first order in the exponent
\safemath{\code}{C}						% code
\safemath{\dstringset}{\setD^{\star}}		% set of finite length D-ary strings
\safemath{\cwlength}{l}					% codeword length
\safemath{\elength}{L}					% expected codeword length
\safemath{\extension}{C^{\star}}			% code extension
\safemath{\approaches}{\rightarrow}		% i.e., x_{n} -> x
\safemath{\evnt}{\setA}					% event A
\safemath{\altevnt}{\setB}					% event B
\safemath{\rv}{\rndx}					% random variable X
\safemath{\altrv}{\rndy}					% random variable Y
\safemath{\complexrv}{\rndu}					% complex random variable U
\safemath{\altcrv}{\rndv}				% complex random variable V
\safemath{\rvec}{\rvecx}					% random vector X
\safemath{\altrvec}{\rvecy}				% random vector Y
\safemath{\crvec}{\rvecu}				% complex random vector U
\safemath{\altcrvec}{\rvecv}				% complex random vector V
			% indicator function
			% higher-order moment
			% scalar mean
		% mean vector
		% higher-order central moment
\safemath{\variance}{\sigma^{2}}			% variance
				% real part
				% imaginary part
		% complex extension
		% covariance matrix
		% pseudocovariance matrix
\safemath{\map}{T}						% mapping
\safemath{\jacobian}{J}					% jacobian
		% correlation matrix
%mylatex
\safemath{\wvec}{\rvecw}					% white random vector
\safemath{\wrv}{\rndw}					% white noise process
\safemath{\orthmat}{\matQ}				% orthogonal matrix
\safemath{\evmat}{\matLambda}			% eigenvalue matrix (diagonal)
\safemath{\identity}{\matidentity}		% identity matrix
\safemath{\innovec}{\vecv}				% innovations vector
\safemath{\convas}{\xrightarrow{\text{a.s.}}}	% almost sure convergence
\safemath{\convr}{\xrightarrow{\text{r}}}	% convergence in r-th mean
\safemath{\convp}{\xrightarrow{\text{P}}}	% convergence in probability
\safemath{\convd}{\xrightarrow{\text{D}}}	% convergence in distribution
		% correlation function
		% covariance function
		% pseudocovariance function
			% power spectral density
\safemath{\ltis}{\opL}				% LTI system
\safemath{\ir}{h}					% impulse response
\safemath{\tf}{\MakeUppercase{\ir}}	% transfer function
		% characteristic function

%%%%%%%%%%%%%%%%%%%%%%%%%%%%
%: theorems, lemmata, etc.
%%%%%%%%%%%%%%%%%%%%%%%%%%%%
\newtheorem{thm}{Theorem}
\newtheorem{lem}[thm]{Lemma}
\newtheorem{rem}{Remark}

%!TEX root = ncblock.tex
%%%%%%%%%%%%%%%%%%%%%%%%%%%%%%%%%%%%%%%%%%%%%%%%%%%%%%%%%%%%%%%%%%%%%%%%%%%%%
% defs.tex
% 
% Veniamin Morgenshtern 
%
%%%%%%%%%%%%%%%%%%%%%%%%%%%%%%%%%%%%%%%%%%%%%%%%%%%%%%%%%%%%%%%%%%%%%%%%%%%%%

\newcommand{\blocklength}{T}	 			% block length
\newcommand{\rankcov}{Q}	 			    % rank of correlation matrix
\newcommand{\vinp}{\vecx}	 			    % input vector
\newcommand{\vinpinv}{\vecz}	 			    % inverse input vector
\newcommand{\vinpc}{x}	 			    % components of the input vector
\newcommand{\vinpinvc}{z}	 			    % components of the input vector
\newcommand{\invinpc}{\veczc}					% inverse of the input vector
\newcommand{\minp}{\matX}	 			    % input matrix
	 			    % input matrix
%\newcommand{\minpinv}{\hat\matX}	 		% inverse of staced input matrix
	 	% stacked input matrix
\newcommand{\vout}[1]{\vecy_{#1}}	 			    % output vector
\newcommand{\vectout}{\vecy}	 			% vectorized output of SIMO
	 			% components of vectorized output of SIMO
\newcommand{\vnoise}[1]{\vecw_{#1}}	 			    % noise vector
	 			    % noise matrix
\newcommand{\vectnoise}{\vecw}	 			    % noise matrix
	 			    % output matrix
	 			    % output matrix without noise
\newcommand{\vectoutnn}{\hat\vecy}	 			    % vectorized output matrix without noise
\newcommand{\vectoutnnc}{\hat\vecyc}	 			    % vectorized output matrix without noise
\newcommand{\const}{\landauO(1)}	 			    % vectorized output matrix without noise
\newcommand{\constalt}{c}	 			    % vectorized output matrix without noise

\newcommand{\sqrtcov}{\matP}	 			% sqruare root of covariance matrix
\newcommand{\sqrtcovsub}{\tilde\matP}	 	% submatrix of sqruare root of covariance matrix
\newcommand{\sqrtcovc}{\matPc}	 			% components of sqruare root of covariance matrix
	 			% modified sqruare root of covariance matrix
	 			% modified sqruare root of covariance matrix without last column
	 			% modified sqruare root of covariance matrix without last column
	 			% modified sqruare root of covariance matrix without last column
	 			% derivative modified sqruare root of covariance matrix without last column
	 			% derivative modified sqruare root of covariance matrix without last column
	 			% derivative modified sqruare root of covariance matrix without last column
%\newcommand{\vsqrtcov}{\vecp}	 			% vectors, which compose sqruare root of covariance matrix

	 		% stacked derivative of sqruare root of covariance matrix
	 	% sqruare root of covariance matrix (diagonalized version)
	 			% vector forming sqruare root of covariance matrix
	 			% known matrix in the linear system
    % known channel part of the matrix in the linear system 
	 		% known matrix corrupted by noise in the linear system
	 			% noise matrix in the linear system
	 			% unknow in the equation
	 			% RHS of the equation
	 			% signal power	
	 			% noise power

\newcommand{\vchannel}[1]{\vech_{#1}}	 			% vector of channel gains
\newcommand{\viid}[1]{\vecs_{#1}}	 			    % vector of iid gaussians
\newcommand{\miid}{\matS}	 			    % matrix of iid gaussians
\newcommand{\miidc}{\matSc}	 			    % components of matrix of iid gaussians
\newcommand{\vectiid}{\vecs}	 			% vectorized matrix of iid gaussians for SIMO
\newcommand{\vectiidc}{\vecsc}	 			% components vectorized matrix of iid gaussians for SIMO
	 			% inverse of matrix of iid gaussians
	 	% stacked derivative of inverse of matrix of iid gaussians
	 			% vector forming inverse of matrix of iid gaussians
\newcommand{\RXant}{Q}	 			    % number of receive antennas
\newcommand{\RXante}{\rankcov}	 			    % number of receive antennas=Q
	 			    % number of transmit antennas
	 			    % number of antennas

\newcommand{\allel}{\diamond}						%place-holder for "all elements in the set"

\safemath{\propspark}{\text{Property (A)}}

%-----------------------------------------------

\hyphenation{co-var-i-ance WSS-US}
 
\linespread{0.964}
\begin{document}
\IEEEoverridecommandlockouts

%\title{A lower bound on the high-SNR capacity of a single-input multiple-output fading channel}
\title{The SIMO Pre-Log Can Be Larger Than the SISO Pre-Log}

%\svnInfo $Id: dmb_isit09.tex 2619 2009-04-28 10:00:27Z gdurisi $ 

% add a PDFinfo field to store metadata in the output PDF
% \pdfinfo{
% 	/Title		(On the Sensitivity of  Noncoherent Capacity to the Channel Model)
% 	/Author		(Giuseppe Durisi, Veniamin I. Morgenshtern, Helmut Boelcskei)
% 	/Subject		(SVN revision \svnInfoRevision, \today)
% 	/Keywords	(ISIT paper)
% }

\author{
\IEEEauthorblockN{Veniamin I. Morgenshtern, Giuseppe Durisi, and  Helmut B\"{o}lcskei }
\IEEEauthorblockA{ETH Zurich, 8092 Zurich, Switzerland\\
E-mail: \{vmorgens, gdurisi, boelcskei\}@nari.ee.ethz.ch\\} 
%\thanks{This work was supported...}
}

% make the title area
\maketitle

%%%%%%%%%%%%%%%%
\begin{abstract}
We establish a lower bound on the noncoherent capacity pre-log of a temporally correlated Rayleigh block-fading single-input multiple-output (SIMO) channel. 
Surprisingly, when the covariance matrix of the channel satisfies a certain technical condition related to the cardinality of its smallest set of linearly dependent rows, this lower bound reveals that the capacity pre-log in the SIMO case is larger than that in the single-input single-output (SISO) case.

\end{abstract}
\IEEEpeerreviewmaketitle
%%%%%%%%%%%%%%%%%%%%%%%%%%%%%%%%%%%%%%%%%%%%%
\section{Introduction}
\label{sec:introduction}

It is well known that the \emph{coherent-capacity} pre-log  
(i.e., the asymptotic ratio between capacity and the logarithm of SNR, as SNR goes to infinity) of a single-input multiple-output (SIMO) fading channel is equal to $1$ and is, hence, the same as  that of a single-input single-output (SISO) fading channel~\cite{telatar99-11}. 
In the practically more relevant \emph{noncoherent setting}, where neither transmitter nor receiver have channel-state information, but both are aware of the channel statistics, the effect of multiple antennas on the capacity\footnote{In the remainder of the paper, we consider the noncoherent setting only. Consequently, we will refer to capacity in the noncoherent setting simply as capacity. 
Furthermore, we shall assume Rayleigh fading throughout.} pre-log is understood only for a specific simple channel model, namely, the \emph{constant block-fading} model.
In this model, the channel is assumed to remain constant over a block of $\blocklength$ symbols and to change in an independent fashion from block to block~\cite{marzetta99-01}. For this model, the SIMO capacity pre-log is again equal to the SISO capacity pre-log, but, differently from the coherent case, is given by~$1-1/\blocklength$~\cite{hochwald00-05,zheng02-02}. 

A more general way of capturing channel variations in time is to assume that the fading process is \emph{stationary}. 
In this case, the capacity pre-log is known only in the SISO~\cite{lapidoth05-02} and the MISO~\cite[Thm. 4.15]{koch09} cases. 
The capacity bounds for the SIMO stationary-fading channel available in the literature~\cite[Thm. 4.13]{koch09} do not allow one to determine whether the capacity pre-log in the SIMO case can be larger than that in the SISO case.

In this paper, we focus on a channel model that can be seen as lying in between the general stationary-fading model considered in~\cite{lapidoth05-02,koch09}, and the simpler constant block-fading model analyzed in~\cite{marzetta99-01,zheng02-02}. 
Specifically, we assume that the fading process is independent across blocks of length~$\blocklength$ and temporally correlated within blocks, with the rank of the corresponding $\blocklength\times\blocklength$ channel covariance matrix given
by\footnote{When $\rankcov=\blocklength$, capacity is known to grow double-logarithmically in SNR~\cite{lapidoth03-10}, and, hence, the capacity pre-log is zero.} $\rankcov<\blocklength$.
For this channel model, referred to as the \emph{correlated block-fading} model in the following, the SISO capacity pre-log  is equal to~$1-\rankcov/\blocklength$~\cite{liang04-12}.\footnote{The constant block-fading model is obviously a special case ($\rankcov=1$) of the correlated block-fading model.}
The SIMO and MIMO capacity pre-logs are not known in this case. 
% It is, however, conjectured in~\cite{liang04-12} that the capacity pre-log does not increase if multiple antennas are added at the receiver side.
A conjecture in~\cite{liang04-12} on the MIMO capacity pre-log implies that the capacity pre-log in the SIMO case would be the same as that in the SISO case.
% 
% multiple antennas at the receiver do not increase the capacity pre-log.
In this paper, we disprove the conjecture in~\cite{liang04-12} by showing that in the SIMO case a capacity pre-log of~$1-1/\blocklength$ can be obtained when the number of receive antennas is equal to~$\rankcov$, and the channel covariance matrix satisfies a certain technical condition detailed in Theorem~\ref{thm:mainLB}.

%--
\subsubsection*{Notation}

  Uppercase boldface letters denote matrices, and lowercase boldface letters designate vectors. 
The all-zero matrix of appropriate size is written as $\veczero$.
The element in the $i$th row and $j$th column of a  matrix $\mat$ is denoted as $\matc_{i,j}$, and the $i$th component of the vector~$\vectr$ is $\vectrc_i$. 
For a vector $\vectr$, $\diag(\vectr)$ denotes the diagonal matrix that has the entries of $\vectr$ on its main diagonal.
 The superscripts~$\tp{}$ and~$\herm{}$ stand for transposition and Hermitian transposition, respectively. The expectation operator is denoted as~$\Ex{}{\cdot}$. 
For two matrices~$\mat$ and~$\altmat$, we designate the Kronecker product as~$\mat \kron \altmat$; 
to simplify notation, we use the convention that the ordinary matrix product always precedes the Kronecker product, i.e., $\matA\matB\kron\matC=(\matA\matB)\kron\matC$.
%we assume that the matrix product has a higher priority than the Kronecker product and omit unnecessary brackets in the remainder of the paper.
For two functions~$\fun(x)$ and~$\altfun(x)$, the notation~$\fun(x)=\landauO(\altfun(x))$ means that~$\lim_{x\to \infty} \abs{\fun(x)}\!/\!\abs{\altfun(x)}$ is bounded above by a constant.
We use $\natseg{n}{m}$ to designate the set of natural numbers $\left\{n, n+1,\ldots,m\right\}$.
Let $\altfunvec(\vectr)$ be a vector-valued function; then ${\partial \altfunvec}/{\partial \vectr}$ denotes the Jacobian matrix of the function $\altfunvec(\vectr)$, i.e., the matrix that contains the partial derivative ${\partial \altfun_i}/{\partial \vectrc_j}$ in its $i$th row and $j$th column.
%The spark of a matrix~$\mat$, denoted as $\spark(\mat)$, is defined as the smallest number of linearly dependent columns of $\mat$.
%
We write $\card{\setI}$ to denote the cardinality of the set $\setI$.
For an $M\times N$ matrix $\mat$, and two sets of indices $\setI\subset \natseg{1}{M}$ and $\setJ\subset \natseg{1}{N}$, we use~$\matseg{\mat}{\setI}{\setJ}$ to denote the  $\card{\setI}\times \card{\setJ}$ submatrix of $\mat$ containing the elements
$\left[\matc_{i,j}\right]_{i\in \setI, j\in \setJ}$.	
Similarly, for an $N$-dimensional vector $\vectr$ and a set $\setI\subset\natseg{1}{N}$, we define $\vecseg{\vectr}{\setI}\define [\vectrc_i]_{i\in\setI}$. 
For an $M\times N$ matrix $\mat$, we set 
$\matseg{\mat}{\allel}{\setJ}\define\matseg{\mat}{\natseg{1}{M}}{\setJ}$ and $\matseg{\mat}{\setI}{\allel}\define\matseg{\mat}{\setI}{\natseg{1}{N}}$.
Furthermore,
\begin{equation}
	\label{eq:Ddef}
		\diagd{\mat}\define\begin{bmatrix}
		\diag(\tp{[\matc_{11}\dots \matc_{M 1}]})\\\vdots\\ \diag(\tp{[\matc_{1 N} \dots \matc_{M N}]}) 	
		\end{bmatrix}.
\end{equation}
% %
The eigenvalues of the $N\times N$ matrix $\mat$ are denoted by  $\eval_1(\mat)\ge\cdots\ge \eval_N(\mat)$. 
The logarithm to the base 2 is written as $\log(\cdot)$.
Finally, $\jpg(\vecm,\matC)$ stands for the distribution of a
jointly proper Gaussian (JPG) random vector with mean~$\vecm$ and covariance
matrix~$\matC$.
%%%%%%%%%%%%%%%%%%%%%%

\section{System Model} % (fold)
\label{sec:system_model}
We consider a SIMO channel with $\RXant$ receive antennas. 
The fading in each \emph{component channel} follows the correlated block-fading model described in the previous section, namely, it is independent across blocks of length~$\blocklength$, and correlated within blocks, with the rank of the corresponding channel covariance matrix given by $\rankcov<\blocklength$. 
Note that we assume the rank of the channel covariance matrix to be equal to the number of receive antennas. 
Our analysis relies heavily on this assumption. 
Across component channels, the fading is independent and identically distributed.
% The block size~$\blocklength$ and the temporal covariance matrix are the same for each component channel.
The input-output (I/O) relation (within any block) for the $m$th component channel can be written as
\begin{equation*}
	%\label{eq:model1}
	%\nonumber
	\vout{m}=\sqrt{\SNR}\,\diag(\vchannel{m}) \vinp+\vnoise{m}, \quad m\in\natseg{1}{\RXant}
\end{equation*}
where the vector $\vinp=\tp{[\vinpc_1 \cdots\, \vinpc_\blocklength]}\in \complexset^\blocklength$ contains the $\blocklength$-dimensional signal transmitted within the block, and the vectors $\vout{m},\vnoise{m}\in\complexset^\blocklength$ contain  the corresponding received signal and additive noise, respectively, at the $m$th antenna. 
Finally, $\vchannel{m} \in \complexset^\blocklength$ contains the channel coefficients between the transmit antenna and the $m$th receive antenna. 
We assume that~$\vnoise{m}\distas\jpg(\veczero,\matI_{\blocklength})$ and~$\vchannel{m}\distas~\jpg(\veczero,\sqrtcov\herm\sqrtcov)$ are mutually independent (and independent across $m$) and that $\sqrtcov\in\complexset^{\blocklength\times\rankcov}$ (which is the same for all blocks) has rank~$\rankcov<\blocklength$.
It will turn out convenient to write the channel-coefficient vector in whitened form as $\vchannel{m}=\sqrtcov \viid{m}$, where $\viid{m}\distas\jpg(\veczero,\matI_{\rankcov})$.
Finally, we assume that $\viid{m}$ and $\vnoise{m}$ change in an independent fashion from block to block.%, and that coding is performed over infinitely many independent blocks.

If we define $\tp\vectout\define [\tp{\vout{1}} \cdots\, \tp{\vout{\RXant}}]$,
$\tp\vectiid\define [\tp{\viid{1}} \cdots\, \tp{\viid{\RXant}}]$, $\tp\vectnoise\define [\tp{\vnoise{1}} \cdots \tp{\vnoise{\RXant}}]$, and $\minp\define\diag(\vinp)$, 
we can write the channel I/O relation in the following---more compact---form%, both of which will be useful:
\begin{equation}
	\label{eq:IOstacked}
	\vectout=\sqrt{\SNR}\left(\identity_{\RXant}\kron\minp\sqrtcov\right)\vectiid+\vectnoise.
\end{equation}
%
%
%which will be convenient for the ensuing analysis.
% Under the assumption of the input $\vinp$ being subject to an average-power constraint according to
% \begin{equation}
% 	\label{eq:powerconstraint}
% 	\Ex{}{\vecnorm{\vinp}^2}\le \blocklength
% \end{equation} 
The capacity of the channel~\eqref{eq:IOstacked} is defined as
\begin{equation}
	\label{eq:capacitydef}
\capacity(\SNR)\define(1/\blocklength)\sup_{\pdf{\vinp}(\cdot)} \mi(\vinp;\vectout)
\end{equation}
where the supremum is taken over all input distributions $\pdf{\vinp}(\cdot)$ that satisfy the average-power constraint $\Ex{}{\vecnorm{\vinp}^2}\le \blocklength$.
%[TODO: what is the goal? What is presented next?]

\section{Intuitive Analysis}
\label{sec:intuition}
In this section, we describe a ``back-of-the-envelope'' method for guessing the capacity pre-log. A formal justification of this procedure is provided in Section~\ref{sec:flwSIMO}.

The capacity pre-log characterizes the asymptotic behavior of the fading-channel capacity at high SNR, i.e., in the regime where the additive noise can ``effectively'' be ignored. 
In order to guess the capacity pre-log, we therefore consider the problem of identifying the transmit symbols $\vinpc_{i},\ i\in\natseg{1}{\blocklength},$ from the \emph{noise-free} (and rescaled) observation
\begin{equation}
	\label{eq:IOnonoise}
	\vectoutnn\define\left(\identity_{\RXant}\kron\minp\sqrtcov\right)\vectiid.
\end{equation}
Specifically, we shall ask the question: ``How many symbols $\vinpc_{i}$ can be identified uniquely from $\vectoutnn$ given that the channel coefficients $\vectiid$ are unknown but the statistics of the channel, i.e., the matrix $\sqrtcov$, are known?'' 
The claim we make is that the capacity pre-log is given by the number of these symbols divided by the block length $\blocklength$. 

We start by noting that the unknown variables in~\eqref{eq:IOnonoise} are $\vectiid$ and $\vinp$, which means that we have a quadratic system of equations. It turns out, however, that the simple change of variables $\invinpc_i\define 1/\vinpc_i,\ i\in\natseg{1}{\blocklength},$ (we make the technical assumption $0<\abs{\vinpc_i}<\infty,\ i\in\natseg{1}{\blocklength}$, in the remainder of this section) transforms~\eqref{eq:IOnonoise} into a system of equations that is linear in $\vectiid$ and $\invinpc_i,\ i\in\natseg{1}{\blocklength}$. Since the transformation $\invinpc_i\define 1/\vinpc_i$ is invertible for $0<\vinpc_i<\infty$, uniqueness of the solution of the linear system of equations in $\vectiid$ and $\invinpc_i,\ i\in\natseg{1}{\blocklength},$ is equivalent to uniqueness of the solution of the quadratic system of equations in $\vectiid$ and $\vinpc_i,\ i\in\natseg{1}{\blocklength}$. For simplicity of exposition and concreteness, we consider the special case $\blocklength=3$ and $\RXant=2$. A direct computation reveals that~\eqref{eq:IOnonoise} is equivalent to
\begin{equation}
	\label{eq:linearsystem1}
	\begin{bmatrix}
		\sqrtcovc_{11} & \sqrtcovc_{12} & 0 & 0 & \vectoutnnc_1 & 0 & 0 \\
		\sqrtcovc_{21} & \sqrtcovc_{22} & 0 & 0 & 0 & \vectoutnnc_2 & 0 \\
		\sqrtcovc_{31} & \sqrtcovc_{32} & 0 & 0 & 0 & 0 & \vectoutnnc_3 \\
		0 & 0 & \sqrtcovc_{11} & \sqrtcovc_{12} & \vectoutnnc_4 & 0 & 0 \\
		0 & 0 & \sqrtcovc_{21} & \sqrtcovc_{22} & 0 & \vectoutnnc_5 & 0 \\
		0 & 0 & \sqrtcovc_{31} & \sqrtcovc_{32} & 0 & 0 & \vectoutnnc_6 \\
	\end{bmatrix}
	\begin{bmatrix}
		\vectiidc_1\\
		\vectiidc_2\\
		\vectiidc_3\\
		\vectiidc_4\\
			-\invinpc_1\\
			-\invinpc_2\\
			-\invinpc_3\\
	\end{bmatrix}=\veczero.
\end{equation}
The solution of this linear system of equations is not unique, as we have 6 equations in 7 unknowns.
% This linear system of equations can never have a unique solution as we have 6 equations for 7 unknowns. 
The $\vinpc_i=1/\invinpc_i,\ i\in\natseg{1}{3},$ can, therefore, not be determined uniquely from $\vectoutnn$. 
However, if we transmit one pilot symbol and two data symbols, the system of equations becomes solvable. 
Take for example $\vinpc_1=1$ and let the receiver know the value of this (pilot) symbol.  Then~\eqref{eq:linearsystem1} reduces to the following inhomogeneous system of 6 equations in 6 unknowns
\begin{equation}
	\underbrace{\begin{bmatrix}
		\sqrtcovc_{11} & \sqrtcovc_{12} & 0 & 0 & 0 & 0\\
		\sqrtcovc_{21} & \sqrtcovc_{22} & 0 & 0 & \vectoutnnc_2 & 0\\
		\sqrtcovc_{31} & \sqrtcovc_{32} & 0 & 0 & 0 & \vectoutnnc_3\\
		0 & 0& \sqrtcovc_{11} & \sqrtcovc_{12}& 0 & 0 \\
		0 & 0& \sqrtcovc_{21} & \sqrtcovc_{22} & \vectoutnnc_5 & 0  \\
		0 & 0& \sqrtcovc_{31} & \sqrtcovc_{32} & 0 & \vectoutnnc_6 \\
	\end{bmatrix}}_\altmat
	\begin{bmatrix}
		\vectiidc_1\\
		\vectiidc_2\\
		\vectiidc_3\\
		\vectiidc_4\\
		-\invinpc_2\\
		-\invinpc_3\\
	\end{bmatrix}=
	\begin{bmatrix}
		\vectoutnnc_1\\
		0\\
		0\\
		\vectoutnnc_4\\
		0\\
		0\\
	\end{bmatrix}.
	\label{eq:inhomeq}
\end{equation}
This system of equations has a unique solution if $\det\altmat\ne 0$. % and the right-hand-side (RHS) of~\eqref{eq:inhomeq} is nonzero. 
We prove in Appendix~\ref{app:bijection} 
%(this follows, for example, from the result $\det\matG\ne 0$ for $\matG$ defined in~\eqref{eq:Gdef}, which is proven in Appendix~\ref{app:bijection}) 
that under the technical condition on $\sqrtcov$ specified in Theorem~\ref{thm:mainLB} below, we, indeed, have that $\det\altmat\ne 0$ for almost all\footnote{Except for a set of measure zero.} $\vectoutnnc_2, \vectoutnnc_3, \vectoutnnc_5, \vectoutnnc_6$. %, and that the vector on the  RHS of~\eqref{eq:inhomeq} is nonzero for almost all $\vectoutnnc_1, \vectoutnnc_4$.  
It, therefore, follows that for almost all $\vectoutnn$, the system of equations~\eqref{eq:inhomeq} has a unique solution. 
Consequently, we can recover $\invinpc_2$ and $\invinpc_3$, and, hence, $\vinpc_2=1/\invinpc_2$ and $\vinpc_3=1/\invinpc_3$. 
Summarizing our findings, we expect that the capacity pre-log of the channel~\eqref{eq:IOstacked}, for the special case~$\blocklength=3$ and $\rankcov=2$, is equal to $2/3$. 
This is larger than the capacity pre-log of the corresponding SISO channel (i.e., the capacity pre-log of one of the component channels), which is  equal to $1-\rankcov/\blocklength=1/3$~\cite{liang04-12}. 

In general, we expect that under some technical conditions on $\sqrtcov$ the capacity pre-log of the SIMO channel as defined in Section~\ref{sec:system_model} % described in Section~\ref{sec:system_model} 
is equal to $(\blocklength-1)/\blocklength=1-1/\blocklength$. This is exactly what we intend to show rigorously in the next section.

\section{A Lower Bound on the Capacity Pre-Log} % (fold)
\label{sec:flwSIMO}
The main result of this paper is the following theorem.
\begin{thm}
\label{thm:mainLB}
Assume that	there exists a subset of indices $\setI\subset\natseg{1}{\blocklength}$ of cardinality $\card{\setI}=\rankcov+1$ such that the $[(\rankcov+1)\times\rankcov]$-dimensional submatrix $\sqrtcovsub\define \matseg{\sqrtcov}{\setI}{\allel}$ of the matrix $\sqrtcov$ in~\eqref{eq:IOstacked} satisfies the following \emph{\propspark:} Any set of~$\rankcov$ rows of~$\sqrtcovsub$ is linearly independent.
%
%  \begin{equation}
%  	 		\rank\lefto(\matseg{\sqrtcovsub}{\setO}{\allel}\right)=\rankcov
%  	 		\label{eq:sparkprop}
%  	\end{equation}
% for any set $\setO\subset\natseg{1}{\rankcov+1}$ of cardinality $\rankcov$.
%	%
	%  \begin{equation}
	% % 	 		\spark\lefto(\tp\sqrtcovsub\right)=\rankcov+1.
	%  	 		\label{eq:sparkprop}
	%  	\end{equation}
	%
%	\item the number of receive antennas in the channel I/O relation~\eqref{eq:IOstacked} is~$\rankcov$.
%\end{itemize}
%
%
Then, the capacity of the SIMO channel~\eqref{eq:IOstacked} can be lower-bounded as 
\begin{equation}
 		%\nonumber
\label{eq:mainbound}
 		\capacity(\SNR)\ge \left(1-1/\blocklength\right)\log\lefto(\SNR\right)+\const,\ \SNR\to\infty.
\end{equation}
% 	Assume that\footnote{In the rest of the paper we focus exclusively on the special case  $\rankcov=\RXant$ without explicitly mentioning it. The general case is still an open problem.} $\rankcov=\RXant$ and 
% 	that there is a submatrix $\sqrtcovsub$ of dimension $(\rankcov+1)\times\rankcov$ of the matrix $\sqrtcov$ in~\eqref{eq:IOstacked} that satisfies the following property
% 	% set of indexes $\setI\subset\natseg{1}{\blocklength}$ of size $\card{\setI}=\rankcov+1$, such that 
% 	%the corresponding submatrix $\matseg{\sqrtcov}{\setI}{}$ of matrix $\sqrtcov$ in~\eqref{eq:IOstacked} has $
% 	\begin{equation}
% 		\spark\lefto(\tp\sqrtcovsub\right)=\rankcov+1.
% 		\label{eq:sparkprop}
% 	\end{equation}
% 	Then the capacity of the SIMO channel \eqref{eq:IOstacked} can be lower-bounded as
% 	\begin{equation}
% 		\nonumber
% 		\capacity(\SNR)\ge \left(1-\frac{1}{\blocklength}\right)\log\lefto(\SNR\right)+\const,\ \SNR\to\infty.
% 	\end{equation}
% %	where here and in the rest of the paper $\const$ means a finite constant independent of $\SNR$.
\end{thm}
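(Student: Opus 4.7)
The plan is to lower-bound the mutual information in~\eqref{eq:capacitydef} via $I(\vinp;\vectout)\ge h(\vinp)-h(\vinp\mid\vectout)$ for a carefully chosen input distribution and then to control $h(\vinp\mid\vectout)$ by constructing an explicit estimator of $\vinp$ from $\vectout$. The key device is the reciprocal change of variables $\invinpc_i=1/\vinpc_i$ from Section~\ref{sec:intuition}, which turns the bilinear I/O relation into a linear system in the joint unknown $(\vectiid,\{\invinpc_i\})$. Since each data symbol $\invinpc_i$ appears in $\RXant$ scalar equations (one per receive antenna) while being shared across them, a single pilot suffices to make the system solvable for all $\blocklength-1$ data symbols, which is ultimately what produces the $1-1/\blocklength$ pre-log.

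I would first pick a pilot index $p\in\setI$, set $\vinpc_p=1$, and draw the remaining $\vinpc_i$ i.i.d.\ from a distribution supported on an annulus $\{u\in\complexset\colon a\le|u|\le b\}$ with $0<a<b$ chosen so that the average-power constraint in~\eqref{eq:capacitydef} is met. This ensures $h(\vinp)=\const$ and, since $\vinpc_i\mapsto\invinpc_i=1/\vinpc_i$ is a smooth bijection between two compact annuli, $h((\invinpc_i)_{i\ne p})=\const$, with mutual information preserved under the bijection. I would then select the $\RXant(\RXant+1)$ equations in~\eqref{eq:IOstacked} indexed by $(m,i)\in\natseg{1}{\RXant}\times\setI$. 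Following the reasoning of Section~\ref{sec:intuition} (extended from $\blocklength=3,\RXant=2$ to general parameters), these can be recast using $\invinpc_p=1$ as an inhomogeneous linear system of dimension $\RXant(\RXant+1)$ for the $\RXant^2+\RXant$ unknowns $(\vectiid,\{\invinpc_i\}_{i\in\setI\setminus\{p\}})$, with coefficient matrix $\altmat$ whose entries come from $\sqrtcovsub$ and from rescaled observations $\vectout/\sqrt{\SNR}$. A direct extension of the argument in Appendix~\ref{app:bijection}, invoking \propspark\ for $\sqrtcovsub$, shows that $\det\altmat\ne 0$ almost surely; inverting $\altmat$ then yields estimates whose errors scale as $1/\sqrt{\SNR}$. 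Substituting these back into the remaining $\RXant(\blocklength-\RXant-1)$ equations produces estimates of $\invinpc_i$ for $i\notin\setI$ with the same error order.

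Writing $\vecz\define(\invinpc_i)_{i\ne p}$ and $\hat{\vecz}(\vectout)$ for the resulting estimator, I would apply $h(\vecz\mid\vectout)\le h(\vecz-\hat{\vecz}(\vectout))$ together with the Gaussian maximum-entropy inequality for complex vectors, giving $h(\vecz\mid\vectout)\le(\blocklength-1)\log\bigl(\pi e\,\Ex{}{\vecnorm{\vecz-\hat{\vecz}}^2}/(\blocklength-1)\bigr)$. Once $\Ex{}{\vecnorm{\vecz-\hat{\vecz}}^2}=\landauO(1/\SNR)$ is established, this reads $h(\vecz\mid\vectout)\le-(\blocklength-1)\log(\SNR)+\const$, and combining with $h(\vecz)=\const$ and the bijection $\vinp\leftrightarrow(\vinpc_p,\vecz)$ yields $I(\vinp;\vectout)\ge(\blocklength-1)\log(\SNR)+\const$. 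Dividing by $\blocklength$ then delivers~\eqref{eq:mainbound}.

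The principal obstacle is precisely the moment bound $\Ex{}{\vecnorm{\vecz-\hat{\vecz}}^2}=\landauO(1/\SNR)$. Because the estimation error has the structure $\inv\altmat\,\vectnoise/\sqrt{\SNR}$ with $\altmat$ depending nonlinearly on the noisy observations, the bound requires controlling $\Ex{}{\opnorm{\inv\altmat}^2}$, or at least the logarithmic moment $\Ex{}{\log(1/\abs{\det\altmat})}<\infty$ (which, combined with a conditional-Jensen step, suffices for the entropy bound). This quantitative non-degeneracy of $\altmat$ is strictly stronger than the almost-sure invertibility established in Appendix~\ref{app:bijection}. A natural route is to observe that $\det\altmat$ is a polynomial of bounded degree in the underlying Gaussian variables $\vectiid$ and $\vectnoise$ and to invoke an anti-concentration inequality (e.g., of Carbery--Wright type) to show that $\Prob\{\abs{\det\altmat}\le\varepsilon\}$ decays polynomially in $\varepsilon$, which would imply $\Ex{}{\log(1/\abs{\det\altmat})}<\infty$.
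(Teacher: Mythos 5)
Your route is genuinely different from the paper's: you bound $\mi(\vinp;\vectout)\ge \diffent(\vecz)-\diffent(\vecz\given\vectout)$ and try to control $\diffent(\vecz\given\vectout)$ through an explicit estimator built by inverting the linearized system, whereas the paper bounds $\diffent(\vectout)-\diffent(\vectout\given\vinp)$ and computes $\diffent(\vectout)$ by pushing the entropy of $(\vectiid,\vecseg{\vinp}{\natseg{2}{\blocklength}})$ \emph{forward} through the noise-free map $\vecseg{\vectoutnn}{\setJ}(\cdot)$ using the change-of-variables formula for differential entropy. The paper's choice is not cosmetic: the forward direction only requires $\Ex{}{\log\abs{\det\partial\vecseg{\vectoutnn}{\setJ}/\partial(\vectiid,\vecseg{\vinp}{\natseg{2}{\blocklength}})}}>-\infty$, where the Jacobian is evaluated at the \emph{true, noiseless} variables and factorizes (Appendix~B) into terms whose log-moments are finite by elementary arguments plus the full-rank property of $\matM_3(\sqrtcov)$ (Appendix~C). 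Your direction instead requires quantitative control of the \emph{inverse} of a matrix $\altmat$ assembled from \emph{noisy} observations, which is exactly the difficulty the paper's construction is designed to sidestep.

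That step is a genuine gap, and in the form you first state it the claim is false: $\Ex{}{\vecnorm{\vecz-\hat{\vecz}}^2}=\landauO(1/\SNR)$ cannot hold, because the error contains the factor $\inv{\altmat}$ and $\det\altmat$ is a polynomial in jointly Gaussian variables, so $\Ex{}{\opnorm{\inv{\altmat}}^2}$ diverges (already $\Ex{}{1/\abs{g}^2}=\infty$ for a single Gaussian $g$); the Gaussian maximum-entropy bound then gives nothing. Your fallback — replace the second moment by $\Ex{}{\log(1/\abs{\det\altmat})}<\infty$ via anti-concentration and a conditional-Jensen step — is a plausible direction but is not carried out, and the missing part is the hard part: (i) you must bound $\Ex{\vectout}{\log\Ex{}{\vecnorm{\vecz-\hat{\vecz}}^2\given\vectout}}$, not the unconditional moment, and the conditional error is not simply $\inv{\altmat}\vectnoise/\sqrt{\SNR}$ because the noise perturbs the coefficient matrix as well as the right-hand side, so the linearization fails precisely near the singular set of $\altmat$ where the log-moment is generated; (ii) the resulting bound must be uniform in $\SNR$, since the law of $\altmat$ depends on $\SNR$ through the noise scaling. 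Also, a small but real secondary issue: inverting the annulus-supported $\vinpc_i$ keeps $\diffent(\vecz)$ finite, but the paper needs (and you would too, in your error analysis) the condition $\Ex{}{\log\abs{\vinpc_i}}>-\infty$ rather than boundedness away from zero; your compact-annulus choice is sufficient but then the power normalization with a deterministic pilot must be checked explicitly. None of these points is fatal to the strategy, but as written the proof does not close; the paper's entropy-transformation argument (Lemmas~\ref{lem:bijection} and~\ref{lem:Entrchange} together with the factorization~\eqref{eq:partialjacobianfact}) is the device that makes the finiteness argument elementary, and your proposal has no substitute for it.
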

\begin{rem}
\label{rem:first}
For the special case $\blocklength=\rankcov+1$,~\eqref{eq:mainbound} yields a lower bound on the capacity pre-log that is tight. 
A matching upper bound can be obtained through steps similar to those in the proof of~\cite[Prop. 4]{liang04-12}. 
Establishing tight upper bounds on the capacity pre-log for general values of $\blocklength$, however, seems to be an open problem.
Different tools than those used in~\cite{liang04-12} are probably needed. 
\end{rem}
\begin{rem}
\label{rem:third}
When $\rankcov=1$, the channel in~\eqref{eq:IOstacked} reduces to a SISO constant block-fading channel, and the lower bound~\eqref{eq:mainbound} yields the correct capacity pre-log of~\cite{zheng02-02, hochwald00-05}. 	
\end{rem}
\begin{rem}
\label{rem:second}
	\propspark is not very restrictive and is satisfied by many practically relevant matrices $\sqrtcov$. 
	For example, removing any set of $\blocklength-\rankcov$ columns from a $\blocklength\times\blocklength$ discrete Fourier transform (DFT) matrix, results in a matrix that satisfies \propspark when $\blocklength$ is prime~\cite{tao05}. 
	%DFT matrices are often used to model the  temporal covariance of fading channels~\cite{liang04-12}.  
DFT covariance matrices occur naturally in basis-expansion models for time-varying channels~\cite{liang04-12}.
\end{rem}
\begin{proof}
We choose an input distribution for which the entries~$\vinpc_i,\   i\in\natseg{1}{\blocklength}$,  of~$\vinp$, are independent and identically distributed (\iid), have zero mean and unit variance,  and satisfy~$\Ex{}{\log \abs{\vinpc_i}}>-\infty$ and $\diffent(\vinpc_i)>-\infty$. For example, we can take $\vinpc_i\distas\jpg(0,1)$. 
% [Note that the class of input distributions that allows us to prove the theorem is surprisingly wide. This is because we are merely building a lower bound with a supposedly tight pre-log, but we do not try to find the correct constants in the high SNR capacity expansion.]
We then lower-bound $\mi(\vinp;\vectout)$ in~\eqref{eq:capacitydef}, evaluated for this input distribution. 
More precisely, we use $\mi(\vinp;\vectout)=\diffent(\vectout)-\diffent(\vectout\given \vinp)$ and bound the two differential entropy terms separately. Note that the class of input distributions for which~\eqref{eq:mainbound} holds is large.
This does not come as a surprise, as we are interested in the capacity pre-log only.

As~$\vectout$ conditional on~$\vinp$ is JPG, the conditional differential entropy~$\diffent(\vectout\given \vinp)$ can be upper-bounded in a straightforward fashion as follows:
\begin{align}
	\label{eq:UBcond}
		&\diffent(\vectout\given \vinp)=\RXante \blocklength \log(\pi e)\nonumber\\
		&+\Ex{\vinp}{\log\det \lefto(\identity_{\RXante \blocklength}+
		\SNR\left(\identity_{\RXante}\kron\minp\sqrtcov\right)\Ex{\vectiid}{\vectiid\herm\vectiid}\left(\identity_{\RXante}\kron\herm\sqrtcov\herm\minp\right)\right)}\nonumber	\\
		% % &=\RXante \blocklength \log(\pi e)+\Ex{\vinp}{\log\det \lefto(\npower\identity_{\RXante \blocklength}+
		% 		% 		\spower\left(\identity_{\RXante}\kron\minp\sqrtcov\right)\left(\identity_{\RXante}\kron\herm\sqrtcov\herm\minp\right)\right)}\\
		% 		% 		&=\RXante \blocklength \log(\pi e)+\Ex{\vinp}{\log\det \lefto(\npower\identity_{\RXante \blocklength}+
		% 		% 		\spower\left(\identity_{\RXante}\kron\minp\sqrtcov\herm\sqrtcov\herm\minp\right)\right)}\\
		% 		% 		&=\RXante \blocklength \log(\pi e)+\RXante\Ex{\vinp}{\log\det \lefto(\npower\identity_{ \blocklength}+
		% 		% 		\spower\left(\minp\sqrtcov\herm\sqrtcov\herm\minp\right)\right)}\\
		% 		&=\RXante \blocklength \log(\pi e)+\RXante\Ex{\vinp}{\log\det \lefto(\identity_{ \rankcov}+
		% 		\SNR\!\left(\herm\sqrtcov\herm\minp\minp\sqrtcov\right)\right)}\nonumber\\
		% 		&\stackrel{(a)}{\le} \RXante \blocklength \log(\pi e)+\RXante\log\det \lefto(\identity_{ \rankcov}+
		% 			\SNR\!\left(\herm\sqrtcov\Ex{\vinp}{\herm\minp\minp}\!\sqrtcov\right)\right)\nonumber\\
		% 		&=\RXante \blocklength \log(\pi e)+\RXante\sum_{i=1}^{\rankcov}\log \lefto(1+
		% 				\SNR\eval_i\lefto(\herm\sqrtcov\sqrtcov\right)\right)\nonumber\\
		&\le\rankcov^2\log \lefto(\SNR\right)+\const
\end{align}
where %~(a) follows from Jensen's inequality, and 
the inequality holds because $\sqrtcov$ has rank~$\rankcov$. 
% as a consequence of \propspark and, therefore, $\eval_i\lefto(\herm\sqrtcov\sqrtcov\right)>0$ for all $i\in\natseg{1}{\rankcov}$.

Finding a tight lower bound on $\diffent(\vectout)$ is the main difficulty of the proof.
In fact, the differential entropy of $\vectout$ is often intractable even for simple input distributions. 
The main technical contribution of this paper is presented in Section~\ref{sec:lowerboundhy} below, where we show that if \propspark is satisfied and if the input distribution satisfies the conditions specified at the beginning of this proof, we have
\begin{equation}
		\label{eq:lbdeY}
%\nonumber
	\diffent(\vectout)	\ge (\blocklength-1+\rankcov^2)\log(\SNR)+\constalt
	\end{equation}
	where $\constalt$, here and in the remainder of the paper, stands for a constant\footnote{The value of this constant can change at each appearance.} that is independent of $\SNR$.
 	Combining~\eqref{eq:UBcond} and~\eqref{eq:lbdeY} then yields the desired result.
	Note that in order to establish~\eqref{eq:UBcond} it is sufficient to use that $\sqrtcov$ has rank~$\rankcov$, whereas the more restrictive \propspark is crucial to establish~\eqref{eq:lbdeY}.
\end{proof}	

\subsection{A Lower Bound on $\diffent(\vectout)$} % (fold)
\label{sec:lowerboundhy}
The main idea of our approach
is to relate $\diffent(\vectout)$ to $\diffent(\vectiid,\vinp)= \diffent(\vectiid)+\diffent(\vinp)$, which is generally much simpler to compute than~$\diffent(\vectout)$.
It is possible to relate the entropies of two random vectors in a simple way if 
the vectors are of the same dimension and 
are connected by a \emph{deterministic one-to-one} (in the sense of~\cite[p.7]{rudin87}) function. %(see Lemma~\ref{lem:Entrchange} below). 
 %It is well known that  
%if two random vectors are connected  by a coordinate change, i.e. by a \emph{deterministic bijection} with a \emph{square} Jacobian matrix, then the entropies of these vectors are related in a simple way through the Jacobian of the coordinate change  
% It is clear that $\vectout$ and $(\vectiid,\vecx)$ are not connected by a deterministic one-to-one function 
% because $\vectout$ depends on the noise $\vectnoise$ whereas $(\vectiid,\vecx)$ does not. 
%One can readily see from \eqref{eq:IOstacked} that this is not the case for~$\vectout$ and $(\vectiid,\vecx)$.
This is not the case for~$\vectout$ and $\tp{[\tp\vectiid\ \tp\vecx]}$.
%, because the dimensionalities of $\vectout$ and of $\tp{[\tp\vectiid\ \tp\vecx]}$ are, in general, different; and, because $\vectout$ depends on the random noise vector $\vectnoise$ whereas $\tp{[\tp\vectiid\ \tp\vecx]}$ does not.
It is, however, possible to show that if $\vinpc_1$ is a fixed parameter, then there is a deterministic one-to-one function between $\tp{[\tp\vectiid\ \vecseg{\tp\vinp}{\natseg{2}{\blocklength}}]}$ and a specific \emph{subset} $\setJ\subset\natseg{1}{\blocklength\RXant}$ \emph{of components} of the \emph{noiseless} version~$\vectoutnn$ of the output vector.
% \begin{equation}
% 	\nonumber
%	$\vectoutnn\define \left(\identity_{\RXante}\kron\minp\sqrtcov\right)\vectiid,$
% \end{equation}
%We denote such subset of components as $\vecseg{\vectoutnn}{\setJ}$.
 This allows us to relate $\diffent(\vecseg{\vectoutnn}{\setJ}\given \vinpc_1)$ to $\diffent(\vectiid, \vecseg{\vinp}{\natseg{2}{\blocklength}} \given \vinpc_1)=\diffent(\vectiid, \vecseg{\vinp}{\natseg{2}{\blocklength}})$, which will turn out to be sufficient for our purposes as
 $\diffent(\vectout)$ can be linked to  $\diffent(\vecseg{\vectoutnn}{\setJ}\given \vinpc_1)$ according to~\eqref{eq:outputremlast} below. 
%Summarizing, we can relate $\diffent(\vectout)$ to $\diffent(\vecseg{\vectoutnn}{\setJ}\given \vinpc_1)$ and then to $\diffent(\vectiid, \vecseg{\vinp}{\natseg{2}{\blocklength}} \given \vinpc_1)$, which is simple to compute. 
We now describe the details of the proof program outlined above.

% This is detailed in Lemma~\ref{lem:bijection}. 

\begin{lem}
	\label{lem:bijection}
 	Assume that the matrix $\sqrtcov$ satisfies the conditions of Theorem~\ref{thm:mainLB} and take the submatrix $\sqrtcovsub$ defined in Theorem~\ref{thm:mainLB} to consist of the first~$\rankcov+1$ rows of  $\sqrtcov$ for simplicity.\footnote{This assumption will be made in the remainder of the paper, without explicitly mentioning it again.}
Let
	\begin{align}
		\label{eq:setJ}
		\setJ\define\natseg{1}{\blocklength}&\union \natseg{\blocklength+1}{\blocklength+\rankcov+1}\nonumber\\
		&\union \natseg{2\blocklength+1}{2\blocklength+\rankcov+1}
		%\setdiff \setI+\blocklength\right) \union \left(\natseg{1}{\blocklength}\setdiff \setI+2\blocklength\right)
		\union\cdots\nonumber\\
		&\union \natseg{(\RXante-1)\blocklength+1}{(\RXante-1)\blocklength+\rankcov+1}
	\end{align}
%	In words, the corresponding vector $\vecseg{\vectout}{\setJ}$ contains all the entries of~$\vout{1}$, plus the entries in the first~$\rankcov+1$ positions of the other output vectors~$\vout{m}$, $m\in\natseg{2}{\rankcov}$.
	where $\card{\setJ}=\blocklength-1+\rankcov^2$, and consider the vector-valued function $\vecseg{\vectoutnn}{\setJ}:\complexset^{\blocklength-1+\rankcov^2}\to\complexset^{\blocklength-1+\rankcov^2}$
	\begin{equation}
		\label{eq:maintransformation}
		\vecseg{\vectoutnn}{\setJ}\lefto(\vectiid,\vecseg{\vinp}{\natseg{2}{\blocklength}}\right)=\vecseg{\left(\left(\identity_{\RXante}\kron\minp\sqrtcov\right)\vectiid\right)}{\setJ}
	\end{equation}
parametrized by $\vinpc_1\ne 0$. To simplify the notation we will not indicate this parametrization explicitly.
%	where it is understood that $\lefto(\vectiid,\vecseg{\vinp}{\natseg{2}{\blocklength}}\right)$ are the variables of the function $\vecseg{\vectoutnn}{\setJ}(\cdot)$,  while $\vinpc_1\ne 0$ is fixed.
	The function $\vecseg{\vectoutnn}{\setJ}(\cdot)$ is one-to-one almost everywhere (a.e.) on $\complexset^{\blocklength-1+\rankcov^2}$.
\end{lem}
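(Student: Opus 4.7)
The plan is to convert the non-linear bijection question into one about invertibility of a structured linear system via the substitution $\invinpc_i \define 1/\vinpc_i$, and then to verify that the relevant coefficient matrix is generically non-singular by an explicit determinant calculation that leans on \propspark.

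First, I would restrict to the full-measure open set $U \define \{(\vectiid, \vinp) \sothat \vinpc_i \ne 0,\ i \in \natseg{2}{\blocklength}\}$ and perform the biholomorphic change of variables $\invinpc_i \define 1/\vinpc_i$. Multiplying each defining relation $\vectoutnnc_{(m-1)\blocklength+i} = \vinpc_i [\sqrtcov\viid{m}]_i$ through by $\invinpc_i$, and treating $\invinpc_1 = 1/\vinpc_1$ as a known constant moved to the right-hand side, produces a square linear system $\altmat(\vectoutnn)\vecu = \vecv(\vinpc_1, \vectoutnn)$ in the $\blocklength - 1 + \rankcov^2$ unknowns $\vecu = \tp{[\tp{\viid{1}},\,\ldots,\,\tp{\viid{\RXant}},\,\invinpc_2,\,\ldots,\,\invinpc_\blocklength]}$, with $\altmat$ an affine function of $\vectoutnn$. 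Injectivity almost everywhere is then equivalent to $\det\altmat(\vectoutnn) \ne 0$ for a.e.\ $\vectoutnn$.

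The technical core is to prove that $\det\altmat$ is not identically zero as a polynomial in $\vectoutnn$. My strategy is iterated cofactor expansion. Each of the $\blocklength - \rankcov - 1$ columns indexed by $\invinpc_j$ with $j \in \natseg{\rankcov+2}{\blocklength}$ has only a single non-zero entry (from the antenna-$1$ block), so expansion along those columns is forced and leaves a residual $\rankcov(\rankcov+1) \times \rankcov(\rankcov+1)$ matrix $\altmat'$ consisting of $\RXant$ diagonal blocks equal to $\sqrtcovsub$, coupled only through the $\rankcov$ columns indexed by $\invinpc_2,\ldots,\invinpc_{\rankcov+1}$. Each coupling column has exactly one non-zero entry per antenna block---at row $(m,j)$ with value $-\vectoutnnc_{(m-1)\blocklength+j}$---so expansion along these columns amounts to choosing, for each $j \in \natseg{2}{\rankcov+1}$, one antenna $m_j$ from which to take the entry. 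A row-count argument forces $j \mapsto m_j$ to be a permutation of $\natseg{1}{\RXant}$; each such permutation then contributes a monomial of the form $\prod_j \vectoutnnc_{(m_j-1)\blocklength+j}$ multiplied by one $\rankcov \times \rankcov$ sub-determinant of $\sqrtcovsub$ per antenna, obtained by deleting the selected row. By \propspark every such sub-determinant is non-zero, and since distinct permutations yield distinct monomials in $\vectoutnn$, no cancellation can occur; hence $\det\altmat \not\equiv 0$.

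To conclude, the zero set $Z \define \{\vectoutnn \sothat \det\altmat(\vectoutnn) = 0\}$ is a proper algebraic hypersurface in $\complexset^{\blocklength-1+\rankcov^2}$ and has Lebesgue measure zero. On its complement the linear system uniquely recovers $(\viid{1},\ldots,\viid{\RXant},\invinpc_2,\ldots,\invinpc_\blocklength)$, and the additional open condition $\invinpc_j \ne 0$---which holds at an explicit witness point and is therefore generic---allows one to recover $\vinpc_j = 1/\invinpc_j$. Thus $\vecseg{\vectoutnn}{\setJ}$ is injective on the preimage of this full-measure open set, and that preimage itself has full measure since its complement is the zero set of a polynomial in $(\vectiid, \vecseg{\vinp}{\natseg{2}{\blocklength}})$ that is not identically zero. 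I expect the main obstacle to be the bookkeeping in the determinant argument: matching rows and columns across the block structure of $\altmat$ and verifying that \propspark applies to precisely the $\rankcov \times \rankcov$ minors of $\sqrtcovsub$ that arise.
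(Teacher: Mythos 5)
Your overall strategy---linearize the problem via the substitution $\invinpc_i=1/\vinpc_i$ and reduce injectivity to the nonvanishing of a structured determinant controlled by \propspark---is the same as the paper's. Your two-stage cofactor expansion (first along the columns of $\invinpc_j$, $j\in\natseg{\rankcov+2}{\blocklength}$, which have a single nonzero entry in $\setJ$, then along the $\rankcov$ coupling columns, with the row-count argument forcing the assignment $j\mapsto m_j$ to be a permutation) is a correct and rather clean way to see that $\det\altmat$ is not the zero polynomial \emph{in the formal variables} $\vectoutnn$; it effectively replaces the explicit factorization \eqref{eq:partialjacobianfact} and the separate full-rank result of Lemma~\ref{lem:spark} by a distinct-monomials argument. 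The structural difference is that the paper parametrizes the coefficient matrix of the linearized system by the \emph{domain} variables: its $\matG$ in \eqref{eq:Gdef} is built from $\minp^{-1}$ and $\miid$, so each factor of $\det\matG$ is a nontrivial function of $(\vectiid,\vinp)$ directly, and the exceptional set is manifestly a null subset of the domain.

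This is where your version has a genuine gap. You prove that $Z\define\{\vectoutnn\sothat\det\altmat(\vectoutnn)=0\}$ is a null set in observation space, but the lemma requires that the \emph{preimage} of $Z$ under the map \eqref{eq:maintransformation} be a null set in the domain---equivalently, that the composite polynomial $q(\vectiid,\vinp)\define\det\altmat\bigl(\vecseg{\vectoutnn}{\setJ}(\vectiid,\vinp)\bigr)$ is not identically zero. A priori the entire image could lie inside the hypersurface $Z$ (preimages of null sets under polynomial maps need not be null), and your distinct-monomials argument does not survive the substitution $\vectoutnnc_{(m-1)\blocklength+j}=\vinpc_j\,\tp{\vecp_j}\viid{m}$: the monomials $\prod_j\vectoutnnc_{(m_j-1)\blocklength+j}$, distinct for distinct permutations, all acquire the common factor $\prod_j\vinpc_j$ and turn into products $\prod_j\tp{\vecp_j}\viid{m_j}$ whose expansions in the entries of $\miid$ overlap across permutations, so cancellation is no longer excluded for free. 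Your closing assertion that the complement of the good set ``is the zero set of a polynomial that is not identically zero'' is precisely the unproved claim. The fix is either to write the coefficient matrix in the domain variables from the start (as the paper does with $\matG$, whose determinant inherits the factorization of \eqref{eq:partialjacobian} into factors that are individually nonzero on an explicit full-measure set, using \propspark through Lemma~\ref{lem:spark}), or to show separately that the Jacobian \eqref{eq:partialjacobian} is generically nonsingular so that the image is not contained in any proper algebraic hypersurface; either route amounts to the computation you hoped to avoid. A minor additional remark: the condition $\invinpc_j\ne 0$ on the recovered solution needs no genericity argument at all, since both competing preimages are already assumed to lie in the set where $\vinpc_i\ne 0$.
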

\begin{proof}
	See Appendix~\ref{app:bijection}.
\end{proof}
%The proof of this Lemma is straightforward but tedious and we present it in Appendix~\ref{app:bijection}. 
The following comments on Lemma~\ref{lem:bijection} are in order.
For $\blocklength=3$ and $\rankcov=2$ as in the simple example in Section~\ref{sec:intuition}, $\setJ=\natseg{1}{6}$, so that $\vecseg{\vectoutnn}{\setJ}=\vectoutnn$. 
Therefore, the 
one-to-one correspondence established in this lemma simply means that~\eqref{eq:IOnonoise} has a unique solution for fixed $\vinpc_1\ne 0$.
For the proof of the lemma, it is crucial that $\vinpc_1\ne 0$ is fixed. 
In fact, one can check that if none of the components of $\vinp$ is fixed, the resulting equivalent of the function $\vecseg{\vectoutnn}{\setJ}(\cdot)$ cannot be one-to-one, no matter how the set $\setJ$ is chosen.%\footnote{This follows from geometric considerations beyond the scope of this paper.}
Fixing $\vinpc_1\ne 0$ in order to make the function $\vecseg{\vectoutnn}{\setJ}(\cdot)$ be one-to-one corresponds to transmitting a pilot, as done in the simple example in Section~\ref{sec:intuition} by setting~$\vinpc_1=1$.
%The size of the set $\setJ$, which determines the pre-log lower bound, as we shall see below, is dictated by the requirement that the domain and the range of $\vecseg{\vectoutnn}{\setJ}(\cdot)$ must be of the same dimension. Once we fix $\vinpc_1$, the domain of the function $\vecseg{\vectoutnn}{\setJ}(\cdot)$ becomes $\blocklength-1+\rankcov^2$ dimensional and therefore $\setJ$ must contain $\blocklength-1+\rankcov^2$ elements. 
The cardinality of the set $\setJ$, which determines the lower bound on the capacity pre-log, as we shall see below, is dictated by the requirement that $\vecseg{\vectoutnn}{\setJ}$ and $\tp{\big[\tp\vectiid\ \vecseg{\tp\vinp}{\natseg{2}{\blocklength}}\big]}$ are of the same dimension, which implies that $\setJ$ must contain $\blocklength-1+\rankcov^2$ elements. 
The specific choice of \setJ in~\eqref{eq:setJ} simplifies the proof of the lemma (see Appendix~\ref{app:factorization}).

Lemma~\ref{lem:bijection} can be used to relate the conditional differential entropy $\diffent(\vecseg{\vectoutnn}{\setJ}\given \vinpc_1)$ to $\diffent(\vectiid, \vecseg{\vinp}{\natseg{2}{\blocklength}})$. Before doing so, we establish a simple lower bound on $\diffent\lefto(\vectout\right)$ that is explicit in $\diffent\lefto(\vecseg{\vectoutnn}{\setJ}\given \vinpc_1\right)$.
Let $\setN$ be the complement of $\setJ$ in $\natseg{1}{\RXante\blocklength}$.
	Then 
	\begin{align}
		\diffent\lefto(\vectout\right)
		&=\diffent\lefto(\vecseg{\vectout}{\setJ},\vecseg{\vectout}{\setN}\right)=\diffent\lefto(\vecseg{\vectout}{\setJ}\right)+\diffent\lefto(\vecseg{\vectout}{\setN}\given \vecseg{\vectout}{\setJ}\right)\nonumber\\
		&\ge\diffent\lefto(\sqrt{\SNR}\vecseg{\vectoutnn}{\setJ}+\vecseg{\vectnoise}{\setJ}\right)+\diffent\lefto(\vecseg{\vectout}{\setN}\given \vectiid, \vinp,\vecseg{\vectout}{\setJ}\right)\nonumber\\
		&\ge\diffent\lefto(\sqrt{\SNR}\vecseg{\vectoutnn}{\setJ}+\vecseg{\vectnoise}{\setJ}\given \vecseg{\vectnoise}{\setJ}\right)+\diffent\lefto(\vecseg{\vectnoise}{\setN}\right)\nonumber\\
		&%=\diffent\lefto(\sqrt{\SNR}\vecseg{\vectoutnn}{\setJ}\right)+\constalt
		\ge\card{\setJ}\log(\SNR)+\diffent\lefto(\vecseg{\vectoutnn}{\setJ}\given \vinpc_1\right)+\constalt.
		\label{eq:outputremlast}
	\end{align}
	% Notice that this chain of inequalities allows us to reduce the problem to considering the entropy of the \emph{noiseless} channel output $\vectoutnn$. 
	Through this chain of inequalities, we got rid of the noise~\vectnoise. 
	This corresponds to considering the noise-free I/O relation~\eqref{eq:IOnonoise} in the intuitive explanation given  in Section~\ref{sec:intuition}. 
	% This is precisely the reason why we could restrict ourselves to considering the noise-free channel~\eqref{eq:IOnonoise} in the intuition we gave in Section~\ref{sec:intuition}. 
	Inserting~$\card{\setJ}=\blocklength-1+\rankcov^2$ into~\eqref{eq:outputremlast}, we obtain the desired result~\eqref{eq:lbdeY} provided that $\diffent\lefto(\vecseg{\vectoutnn}{\setJ}\given \vinpc_1\right)$ is finite, which will be proved by means of the following lemma. 

\begin{lem}[Transformation of differential entropy]
	\label{lem:Entrchange}
	Assume that $\altfunvec:\complexset^N\to\complexset^N$ is a continuous vector-valued function that is one-to-one a.e. on $\complexset^N$. Let $\vectr\in\complexset^N$ be a random vector and $\altvectr=\altfunvec(\vectr)$.  
	Then
	\begin{equation*}
		%\label{eq:Entrchange}
		%\nonumber
		\diffent(\altvectr)=\diffent(\vectr)+\Ex{\vectr}{\log\abs{\det\lefto({\partial\altfunvec}/{\partial \vectr}\right)}}.
	\end{equation*}
\end{lem}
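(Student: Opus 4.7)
The plan is to apply the standard change-of-variables formula for probability densities and then substitute into the definition of differential entropy. The lemma is a classical result (the complex analogue of, e.g., \cite[Thm.~8.6.4]{cover06}); the only delicate points are handling the a.e. qualifier on the bijectivity of $\altfunvec$ and making the Jacobian convention consistent with the paper's definition of $\diffent(\cdot)$ for complex random vectors.

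First, since $\altfunvec$ is one-to-one a.e. on $\complexset^N$, there exists a Lebesgue-null set $\setN\subset\complexset^N$ such that $\altfunvec$ restricted to $\complexset^N\setdiff\setN$ is injective, with nonsingular Jacobian outside a further null set (we may assume $\altfunvec$ is differentiable a.e., as this is automatic for the specific function $\vecseg{\vectoutnn}{\setJ}(\cdot)$ of Lemma~\ref{lem:bijection}, which is polynomial in its arguments). On $\altfunvec(\complexset^N\setdiff\setN)$, the inverse $\inv{\altfunvec}$ exists and the standard change-of-variables formula yields
\begin{equation*}
\pdf{\altvectr}(\altvectr)=\pdf{\vectr}\lefto(\inv{\altfunvec}(\altvectr)\right)\cdot\abs{\det\lefto({\partial\altfunvec}/{\partial\vectr}\right)}^{-1},
\end{equation*}
where the Jacobian is evaluated at $\vectr=\inv{\altfunvec}(\altvectr)$. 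Since $\setN$ has measure zero and $\vectr$ is absolutely continuous, the images of the exceptional sets contribute nothing to any integral taken against~$\pdf{\altvectr}$.

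Second, substitute into the definition $\diffent(\altvectr)=-\int \pdf{\altvectr}(\altvectr)\log\pdf{\altvectr}(\altvectr)\,d\altvectr$ and change variables via $\altvectr=\altfunvec(\vectr)$, using $d\altvectr=\abs{\det({\partial\altfunvec}/{\partial\vectr})}\,d\vectr$ to cancel the Jacobian factor in the density. This gives
\begin{align*}
\diffent(\altvectr)&=-\int \pdf{\vectr}(\vectr)\log\lefto[\pdf{\vectr}(\vectr)\abs{\det\lefto({\partial\altfunvec}/{\partial\vectr}\right)}^{-1}\right]d\vectr\\
&=\diffent(\vectr)+\Ex{\vectr}{\log\abs{\det\lefto({\partial\altfunvec}/{\partial\vectr}\right)}},
\end{align*}
which is the claimed identity.

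The main obstacle is the rigorous justification of the change-of-variables step in the a.e.\ setting, where $\altfunvec$ may fail to be a diffeomorphism on a null set. This is handled by partitioning $\complexset^N$ into the full-measure set on which $\altfunvec$ is locally invertible with nonsingular Jacobian and a null set whose contribution vanishes; a clean way to do this is via the Federer coarea/area formulas, or by appealing directly to the fact that for absolutely continuous input measures both sides of the identity are unchanged by modifications of~$\altfunvec$ on null sets. Since the application in Lemma~\ref{lem:bijection} concerns a polynomial (hence smooth) map whose exceptional set is an analytic variety of strictly lower complex dimension, these technicalities are automatically satisfied, and a short proof can be given by citing the standard real-variable result after identifying $\complexset^N$ with~$\opR^{2N}$.
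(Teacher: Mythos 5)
Your proof is correct and follows essentially the same route as the paper, which simply invokes the change-of-variable theorem for integrals (Rudin, Thm.~7.26) to transform the density and then the entropy integral; you have merely written out the two substitutions explicitly and addressed the a.e.\ caveat, which is fine. The only point worth double-checking is the Jacobian convention for complex vectors: after identifying $\complexset^N$ with $\reals^{2N}$, a holomorphic map contributes $\log\abs{\det(\partial\altfunvec/\partial\vectr)}^2$ rather than $\log\abs{\det(\partial\altfunvec/\partial\vectr)}$, but this factor of two is immaterial here since the term is SNR-independent and only enters the $\const$ in the pre-log bound.
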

\begin{proof}
	The proof follows from the change-of-variable theorem for integrals~\cite[Thm. 7.26]{rudin87}.
	% and the properties of the $\log(\cdot)$ function in the definition of differential entropy.
\end{proof}

% The function $\vecseg{\vectoutnn}{\setJ}(\cdot)$ in~\eqref{eq:maintransformation} is continuous and is  one-to-one a.e. as shown in Lemma~\ref{lem:bijection}. Hence, we can apply Lemma~\ref{lem:Entrchange} to $\diffent(\vecseg{\vectoutnn}{\setJ}\given \vinpc_1=x)$ and write
% \begin{align}
% 	\label{eq:dec_change_var}
% 	&\diffent(\vecseg{\vectoutnn}{\setJ}\given \vinpc_1)=\int\!\! \pdf{\vinpc_1}(x) \diffent(\vecseg{\vectoutnn}{\setJ}\given \vinpc_1=x) dx\nonumber\\
% 	&\quad=\diffent(\vectiid, \vecseg{\vinp}{\natseg{2}{\blocklength}} \given \vinpc_1)+
% 	\Ex{\vectiid,\vinp}{\log\abs{\det{\frac{\partial \vecseg{\vectoutnn}{\setJ}}{\partial (\vectiid,\vecseg{\vinp}{\natseg{2}{\blocklength}})}}}}
% \end{align}
% where $\pdf{\vinpc_1}(x)$ is the density of $\vinpc_1$.
Let $\pdf{\vinpc_1}(x)$ denote the density of $\vinpc_1$. Then
\begin{align}
	\label{eq:dec_change_var}
	&\diffent(\vecseg{\vectoutnn}{\setJ}\given \vinpc_1)=\int\!\! \pdf{\vinpc_1}(x) \diffent(\vecseg{\vectoutnn}{\setJ}\given \vinpc_1=x) dx\nonumber\\
	&\quad=\diffent(\vectiid, \vecseg{\vinp}{\natseg{2}{\blocklength}} \given \vinpc_1)+
	\Ex{\vectiid,\vinp}{\log\abs{\det{\frac{\partial \vecseg{\vectoutnn}{\setJ}}{\partial (\vectiid,\vecseg{\vinp}{\natseg{2}{\blocklength}})}}}}
\end{align}
where in the second equality we applied Lemma~\ref{lem:Entrchange} to $\diffent(\vecseg{\vectoutnn}{\setJ}\given \vinpc_1=x)$, using that
the function $\vecseg{\vectoutnn}{\setJ}(\cdot)$ in~\eqref{eq:maintransformation} is continuous and is  one-to-one a.e. as shown in Lemma~\ref{lem:bijection}. 
The first term on the RHS of~\eqref{eq:dec_change_var} satisfies 
\begin{equation*}
	%\nonumber
	\diffent(\vectiid, \vecseg{\vinp}{\natseg{2}{\blocklength}} \given \vinpc_1)=\diffent(\vectiid)+\diffent(\vecseg{\vinp}{\natseg{2}{\blocklength}})>-\infty
\end{equation*}
where the inequality follows because the $\vinpc_i, i\in\natseg{2}{\blocklength},$ are \iid and have finite differential entropy.  % from the choice of the input distribution in the beginning of the proof of Theorem~\ref{thm:mainLB}.
It therefore remains to show that the second term on the RHS of~\eqref{eq:dec_change_var} is finite as well.
% \begin{equation}
% 	\label{eq:finitejacobian}
% 	\Ex{\vectiid,\vinp}{\log\abs{\det{\frac{\partial \vecseg{\vectoutnn}{\setJ}}{\partial (\vectiid,\vecseg{\vinp}{\natseg{2}{\blocklength}})}}}}>-\infty.
% \end{equation}
%
As the RHS of~\eqref{eq:maintransformation} is linear in $\vectiid$, we have that
\begin{equation*}
	%\nonumber
	%\frac{\partial \vecseg{\vectoutnn}{\setJ}}{\partial \vectiid}=\matseg{\identity_{\RXante}\kron\minp\sqrtcov}{\setJ}{\allel}.
	{\partial \vecseg{\vectoutnn}{\setJ}}/{\partial \vectiid}=\matseg{\identity_{\RXante}\kron\minp\sqrtcov}{\setJ}{\allel}.
\end{equation*}
Furthermore, using~\cite[Eq. (5), Sec. 7.2]{lutkepohl96}, the RHS of~\eqref{eq:maintransformation} can be rewritten as
% \begin{equation}
% 	\label{eq:equivform}
	$\vecseg{\vectoutnn}{\setJ}(\vectiid,\vecseg{\vinp}{\natseg{2}{\blocklength}})=\vecseg{\left((\miid\kron \identity_\blocklength)\diagd{\sqrtcov}\vinp\right)}{\setJ}$,
% \end{equation}
where the operator $\diagd{\cdot}$ was defined in~\eqref{eq:Ddef} and 
\begin{equation}
	\label{eq:miiddef}
	\tp{\miid}=[\viid{1}\dots\viid{\RXante}].
\end{equation}
Hence, we have that
\begin{equation*}
	%\nonumber
	%\frac{\partial \vecseg{\vectoutnn}{\setJ}}{\partial \vecseg{\vinp}{\natseg{2}{\blocklength}}}=\matseg{(\miid\kron \identity_\blocklength)\diagd{\sqrtcov}}{\setJ}{\natseg{2}{\blocklength}}.
	{\partial \vecseg{\vectoutnn}{\setJ}}/{\partial \vecseg{\vinp}{\natseg{2}{\blocklength}}}=\matseg{(\miid\kron \identity_\blocklength)\diagd{\sqrtcov}}{\setJ}{\natseg{2}{\blocklength}}.
\end{equation*}
To summarize, the Jacobian matrix in~\eqref{eq:dec_change_var} is given by 
\begin{equation}
	\label{eq:partialjacobian}
	\frac{\partial \vecseg{\vectoutnn}{\setJ}}{\partial (\vectiid,\vecseg{\vinp}{\natseg{2}{\blocklength}})}=\begin{bmatrix}
		\matseg{\identity_{\RXante}\kron\minp\sqrtcov}{\setJ}{\allel} & \matseg{(\miid\kron \identity_\blocklength)\diagd{\sqrtcov}}{\setJ}{\natseg{2}{\blocklength}}
	\end{bmatrix}.\\
\end{equation}
As shown in Appendix~\ref{app:factorization}, 
the determinant of this Jacobian matrix can be factorized as follows
\begin{align}
		\label{eq:partialjacobianfact}
		&\abs{\det{\begin{bmatrix}
			\matseg{\identity_{\RXante}\kron\minp\sqrtcov}{\setJ}{\allel} & \matseg{(\miid\kron \identity_\blocklength)\diagd{\sqrtcov}}{\setJ}{\natseg{2}{\blocklength}}
		\end{bmatrix}}}\nonumber\\
		&\quad=\prod_{j\in\natseg{\rankcov+2}{\blocklength}}\abs{\sum_{q=1}^{\rankcov} \miidc_{1q} \sqrtcovc_{j q}}
			\abs{\det\matM_1(\minp)}
		\abs{\det\matM_2(\miid)}\nonumber\\
		&\qquad\qquad\times
			\abs{\det\matM_3(\sqrtcov)}
			\abs{\det\matM_4(\miid)}
			\abs{\det\matM_5(\minp)}
	\end{align}
	where
	{\allowdisplaybreaks
	\begin{align*}
		&\matM_1(\minp)\define \identity_{\RXante}\kron\matseg{\minp}{\natseg{1}{\rankcov+1}}{\natseg{1}{\rankcov+1}}\\% \label{eq:M1}\\
		&\matM_2(\miid)\define \miid\kron \identity_{\rankcov+1}\\%\\\label{eq:M2}\\
		&\matM_3(\sqrtcov)\define \begin{bmatrix}
				\left(\identity_{\rankcov}\kron\sqrtcovsub\right) & \matseg{\diagd{\sqrtcovsub}}{\allel}{\natseg{2}{\rankcov+1}}
		\end{bmatrix}\\%\label{eq:M3}\\ %,\ 	\sqrtcovsub\define \matseg{\sqrtcov}{\setI}{\allel}\label{eq:M3}\\
		&\matM_4(\miid)\define 
		\begin{bmatrix}
			(\miid^{-1}\kron \identity_\rankcov) & \matzero \\ 
				\matzero &  	\identity_{{\rankcov}}
		\end{bmatrix}\\%\label{eq:M4}\\
		&\matM_5(\minp)\define \begin{bmatrix}
			\identity_{\rankcov^2} & \matzero \\ 
			\matzero &  \matseg{\minp}{\natseg{2}{\rankcov+1}}{\natseg{2}{\rankcov+1}}^{-1}
		\end{bmatrix}.%\label{eq:M5}
\end{align*}}%
Hence, we can rewrite the second term on the RHS of~\eqref{eq:dec_change_var} as
% Inserting this factorization into~\eqref{eq:dec_change_var} we obtain
\begin{align}
	&\Ex{\vectiid,\vinp}{\log\abs{\det{\frac{\partial \vecseg{\vectoutnn}{\setJ}}{\partial (\vectiid,\vecseg{\vinp}{\natseg{2}{\blocklength}})}}}}\nonumber\\
	&\quad=\sum_{j\in\natseg{\rankcov+2}{\blocklength}}\Ex{\vectiid}{\log\abs{\sum_{q=1}^{\rankcov} \miidc_{1q} \sqrtcovc_{j q}}}+\Ex{\vinp}{\log\abs{\det
\matM_1(\minp)}}\nonumber	\\
		&\qquad+\Ex{\vectiid}{\log\abs{\det\matM_2(\miid)}}+
	\log\abs{\det\matM_3(\sqrtcov)}\nonumber\\
		&\qquad+\Ex{\vectiid}{\log
	\abs{\det\matM_4(\miid)}}+
	\Ex{\vinp}{\log\abs{\det\matM_5(\minp)}}.
	\label{eq:expdecomp}
\end{align}
The first and the third term on the RHS of~\eqref{eq:expdecomp} are finite because $\miid$ has \iid Gaussian components. The fifth term is finite for the same reason, because $\log
\abs{\det\miid^{-1}}=-\log
\abs{\det\miid}$.
The second and the sixth term are finite because $\Ex{}{\log \abs{\vinpc_i}}>-\infty, i\in\natseg{1}{\blocklength}$, by assumption. 
Finally, we show in Appendix~\ref{app:M_three_full_rank} that  the matrix $\matM_3(\sqrtcov)$ has full rank if  \propspark is satisfied. 
This then implies that the fourth term on the RHS of~\eqref{eq:expdecomp} is also finite.

\section{Conclusion and Further Work}
% We showed that, surprisingly, the noncoherent-capacity pre-log of a temporally correlated SISO block-fading channel can be increased by adding antennas at the receiver side only. 
In this paper, we analyzed the noncoherent-capacity pre-log of a temporally correlated block-fading channel.
We showed that, surprisingly, the capacity pre-log in the SIMO case can be larger than that in the SISO case.
This result was established for the special case of the number of receive antennas being equal to the rank of the channel covariance matrix. Interesting open issues include extending the lower bound in Theorem~\ref{thm:mainLB} to an arbitrary number of receive antennas and finding a tight upper bound on the capacity pre-log.

\appendices
\section{}
\label{app:bijection}
\subsubsection*{Proof of Lemma~\ref{lem:bijection}}
We  need to show that the function $\vecseg{\vectoutnn}{\setJ}(\vectiid,\vecseg{\vinp}{\natseg{2}{\blocklength}})$ is  one-to-one a.e. Hence, we can exclude sets of measure zero from its domain. 
In particular, we shall consider the restriction of the function $\vecseg{\vectoutnn}{\setJ}(\vectiid,\vecseg{\vinp}{\natseg{2}{\blocklength}})$ to the set of pairs $(\vectiid,\vecseg{\vinp}{\natseg{2}{\blocklength}})$,  which satisfy 
\begin{inparaenum}[(i)]
\item
\label{it:i1}
$0 <\abs{\vinpc_i}<\infty$ for all $i\in\natseg{2}{\blocklength}$;
\item
\label{it:i2}
the matrix~$\miid$, defined in~\eqref{eq:miiddef} is invertible;
% \begin{equation}
% 	\label{eq:miiddef}
% 	\miid\define	
% 	\begin{bmatrix}
% 	\tp{\viid{1}}\\
% 	\vdots\\ 
% 	\tp{\viid{\RXante}}
% 	\end{bmatrix}
% \end{equation}
% \item
% \label{it:i3}
% the vector 
% \begin{equation}
% 	\label{eq:vectrdef}
% 	\vectr\define\tp{\bigl[
% 		\tp{\viid{1}}\tilde{\vecp} \ \underbrace{0  \ldots  0}_{\blocklength-1\ \text{times}} \ \tp{\viid{2}}\tilde{\vecp} \ \underbrace{0 \ldots 0}_{\rankcov-1\ \text{times}}\ \ldots\ \tp{\viid{\rankcov}}\tilde{\vecp} \ \underbrace{0 \ldots 0}_{\rankcov-1\ \text{times}}
% 	\bigr]}
% \end{equation}
% with $\tilde{\vecp}\define\tp{[\sqrtcovc_{11} \ldots \sqrtcovc_{1\rankcov}]}$ is nonzero;
\item
\label{it:i4}
the sum
$\sum_{q=1}^{\rankcov} \miidc_{1q} \sqrtcovc_{j q}$
is nonzero for all $j\in\natseg{\rankcov+2}{\blocklength}$.
\end{inparaenum}

To show that this restriction of the function $\vecseg{\vectoutnn}{\setJ}(\cdot)$ [which, with slight abuse of notation we still call $\vecseg{\vectoutnn}{\setJ}(\cdot)$] is one-to-one, we take an element $\tilde{\vecy}$ from its range and prove that the equation 
\begin{equation}
	\label{eq:firsteq}
	\vecseg{\vectoutnn}{\setJ}(\vectiid',\vecseg{\vinp'}{\natseg{2}{\blocklength}})=\tilde{\vecy}
\end{equation}
has a unique solution in the set of pairs $(\vectiid',\vecseg{\vinp'}{\natseg{2}{\blocklength}})$ satisfying the constraints~\eqref{it:i1}--\eqref{it:i4}. 
The element $\tilde{\vecy}$ can be represented as
$\tilde{\vecy}=\vecseg{\left(\left(\identity_{\RXante}\kron\minp\sqrtcov\right)\vectiid\right)}{\setJ}$, with $\minp=\diag\lefto(\tp{[\vinpc_1 \ \tp{\vecseg{\vinp}{\natseg{2}{\blocklength}}}]}\right)$,
where $(\vectiid,\vecseg{\vinp}{\natseg{2}{\blocklength}})$ satisfies the constraints~\eqref{it:i1}--\eqref{it:i4}.
Hence,~\eqref{eq:firsteq} can be rewritten in the following way
% 
% 
% Next, fix an element $\tilde{\vecy}$ from the image of the function $\vecseg{\vectoutnn}{\setJ}(\cdot)$. This element can be written as
% \begin{equation}
% 	\tilde{\vecy}=\vecseg{\left(\left(\identity_{\RXante}\kron\minp\sqrtcov\right)\vectiid\right)}{\setJ},\ \minp=\diag(\tp{[\vinpc_1,\tp{\vecseg{\vinp}{\natseg{2}{\blocklength}}}]}) 
% \end{equation}
% for some $(\vectiid,\vecseg{\vinp}{\natseg{2}{\blocklength}})\in \complexset^{\blocklength-1+\rankcov}$.
%   We want to show that $\tilde{\vecy}$ has a unique inverse under the transformation $\vecseg{\vectoutnn}{\setJ}(\cdot)$, i.e. the equation 
\begin{equation}
 \vecseg{\left(\left(\identity_{\RXante}\kron\minp'\sqrtcov\right)\vectiid'\right)}{\setJ}=\vecseg{\left(\left(\identity_{\RXante}\kron\minp\sqrtcov\right)\vectiid\right)}{\setJ} 
	\label{eq:bijection1}
\end{equation}
with $\minp'=\diag\lefto(\tp{[\vinpc_1 \  \tp{(\vecseg{\vinp'}{\natseg{2}{\blocklength}})}]}\right)$. 
To prove that~\eqref{eq:bijection1} has a unique solution, we follow the approach described in Section~\ref{sec:intuition} and convert~\eqref{eq:bijection1}  into a linear system of equations through a change of variables. 
In particular, thanks to constraint~\eqref{it:i1}, we can multiply both sides of~\eqref{eq:bijection1} by $\matseg{\identity_{\RXante}\kron\minp'}{\setJ}{\setJ}^{-1}$ and by $\matseg{\identity_{\RXante}\kron\minp}{\setJ}{\setJ}^{-1}$ and perform the substitution $\vinpinvc'_i=1/\vinpc'_i,\ i\in\natseg{2}{\blocklength}$. Then, we define $\vinpinv'=\tp{[\vinpinvc_2' \dots \vinpinvc_\blocklength']}$ and manipulate the equation such that all the unknowns are on one side of the equation and all the terms depending on the constant $\vinpc'_1$ are on the other side. 
These steps yield the following inhomogeneous linear system of equations
%  
% 
% Now~\eqref{eq:bijection1} is equivalent to 
% \begin{equation}
% 	\label{eq:bijection12}
%  \vecseg{\left(\left(\identity_{\RXante}\kron\minp^{-1}\sqrtcov\right)\vectiid'\right)}{\setJ}=\vecseg{\left(\left(\identity_{\RXante}\kron\minpinv'\sqrtcov\right)\vectiid\right)}{\setJ}
% \end{equation}
% where $\minpinv'=\diag(\vinpinv'),\ \vinpinv'=[\vinpinvc_1',\ldots,\vinpinvc_\blocklength']$. Using~\cite[Eq. (5), Sec. 7.2]{lutkepohl96}, it can be easily shown that~\eqref{eq:bijection12} holds if and only if
% \begin{equation}
% 	\label{eq:bijection2}
%  \begin{bmatrix}
%  	\matseg{\identity_{\RXante}\kron\minp^{-1}\sqrtcov}{\setJ}{\allel} &-\matseg{(\miid\kron \identity_\blocklength)\diagd{\sqrtcov}}{\setJ}{\allel}
%  \end{bmatrix}
% \begin{bmatrix}
% 	\vectiid'\\
% 	\vinpinv'
% \end{bmatrix}
% =\veczero
% \end{equation}
% with
% \begin{equation}
% 	\label{eq:Ddef}
% 		\diagd{\sqrtcov}=\begin{bmatrix}
% 		\diag([\sqrtcovc_{11}\dots \sqrtcovc_{\blocklength 1}])\\\vdots\\ \diag([\sqrtcovc_{1 \rankcov} \dots \sqrtcovc_{\blocklength \rankcov}]) 	
% 		\end{bmatrix}.
% 		% ,\,\,\text{and}\,\,
% 		% 	\miid=	
% 		% 	\begin{bmatrix}
% 		% 	\tp{\viid{1}}\\
% 		% 	\vdots\\ 
% 		% 	\tp{\viid{\RXante}}
% %\end{bmatrix}.
% \end{equation}
% Using that $\vinpinvc'_1=1/\vinpc'_1$ is fixed, equation \eqref{eq:bijection2} can be written explicitly in the homogeneous form
\begin{equation}
	\label{eq:bijection3}
 \matG
\begin{bmatrix}
	\vectiid'\\
	-\vinpinv'
\end{bmatrix}
=\frac{1}{\vinpc'_1}\vectr
\end{equation}
where 
\begin{align}
 	\label{eq:Gdef}
	\matG&= \begin{bmatrix}
	 	\matseg{\identity_{\RXante}\kron\minp^{-1}\sqrtcov}{\setJ}{\allel} &\matseg{(\miid\kron \identity_\blocklength)\diagd{\sqrtcov}}{\setJ}{\natseg{2}{\blocklength}}
	 \end{bmatrix}\\
%  \end{equation}
% and 
%  \begin{equation}
% 	\label{eq:vectrdef}
	\vectr&=\tp{\bigl[
		\tp\vecp_1\viid{1} \ \underbrace{0  \cdots  0}_{\blocklength-1\ \text{times}} \ \tp\vecp_1\viid{2} \ \underbrace{0 \cdots 0}_{\rankcov\ \text{times}}\ \cdots\ \tp\vecp_1\viid{\rankcov} \ \underbrace{0 \cdots 0}_{\rankcov\ \text{times}}
	\bigr]}\nonumber
\end{align}
and~$\tp\vecp_1$ denotes the first row of~$\sqrtcov$.
% and $\miid$ and $\vectr$ were defined in~\eqref{eq:miiddef} and~\eqref{eq:vectrdef}, respectively. 
%As~$\vectr/\vinpc'_1$ is a nonzero vector by assumption~\eqref{it:i3}, 
The solution of~\eqref{eq:bijection3} is unique if and only if $\det{\matG}\ne 0$. 
To establish this, it is useful to note that the matrix $\matG$ has a structure similar to the matrix on the RHS of~\eqref{eq:partialjacobian} [the only difference is that $\minp$ in~\eqref{eq:partialjacobian} 
is replaced by $\minp^{-1}$ in $\matG$].
Therefore, we can factorize $\det{\matG}$ in exactly the same way as the RHS of~\eqref{eq:partialjacobian}. 
Finally, we invoke the constraints~\eqref{it:i1},~\eqref{it:i2}, and~\eqref{it:i4} together with~\propspark to conclude that each term in the resulting factorization is nonzero. 
This completes the proof.
% 
% 
% 
% To see that this is, indeed, the case we note that the only difference between $\matG$ and the matrix in the RHS of~\eqref{eq:partialjacobian} is that $\minp$ in~\eqref{eq:partialjacobian} 
% is substituted by $\minp^{-1}$ in $\matG$. Therefore we can factorize $\det{\matG}$ just the way we factorized the RHS of~\eqref{eq:partialjacobianfact}. We can then envoke assumption~\eqref{it:i1}, \eqref{it:i2} and \eqref{it:i4} together with~\eqref{eq:sparkprop} to check that each term in the resulting product is nonzero, which completes the proof.

We point out that for $\blocklength=3$ and $\rankcov=2$, the matrix $\altmat$ defined in~\eqref{eq:inhomeq} is related to $\matG$ in \eqref{eq:Gdef} according to $\altmat=(\identity_2\kron\minp)\matG$, and therefore $\det{\altmat}\ne 0$ a.e.,  as claimed in Section~\ref{sec:intuition}.
%; the vector in the RHS of~\eqref{eq:inhomeq} is nonzero a.e., as  $\vectr/\vinpc'_1$ is here.  

%  invoke Lemma~\ref{lem:factrization}  to show that in Appendix~\ref{app:factorization}, $\det{\matG}$ can be factorized as follows 
% \begin{align}
% 	\label{eq:detfactor}
% 	&\abs{\det{\matG}}=\prod_{j\in\natseg{\rankcov+2}{\blocklength}}\abs{\sum_{q=1}^{\rankcov} \miidc_{1q} \sqrtcovc_{j q}}
% 		\abs{\det\matM_1(\minp^{-1})}
% 	\abs{\det\matM_2(\miid)}\nonumber\\
% 	&\qquad\qquad\times
% 		\abs{\det\matM_3(\sqrtcov)}
% 		\abs{\det\matM_4(\miid)}
% 		\abs{\det\matM_5(\minp^{-1})}
% \end{align}
% where $\matM_1$--$\matM_5$ are defined in~\eqref{eq:M1}--\eqref{eq:M5}, respectively.
% 
% 
% Using the assumption~\eqref{eq:sparkprop} on the matrix $\sqrtcovsub$, we show in Appendix~\ref{app:M_three_full_rank} that the matrix $\matM_3(\sqrtcov)$ is full rank and therefore the fourth term in the RHS of~\eqref{eq:detfactor} is nonzero. All the other terms in the RHS of~\eqref{eq:detfactor} are nonzero because we have restricted the domain of the function $\vecseg{\vectoutnn}{\setJ}(\cdot)$ as specified in~\ref{it:i1}--\ref{it:i4}.

% %%%%%%%%%%%%%%%%%%%%%%%%%%%%%%%%%%%%%
% %%%%%%%%%%%%%%%%%%%%%%%%%%%%%%%%%%%%%
%\section{Proof of~\eqref{eq:partialjacobianfact}}
\section{}
\label{app:factorization}
\subsubsection*{Proof of~\eqref{eq:partialjacobianfact}}
	As a consequence of the choice of~$\setJ$ in~\eqref{eq:setJ}, each of the last  $\blocklength-\rankcov-1$ columns of the matrix on the RHS of~\eqref{eq:partialjacobian} has exactly one nonzero element. This allows us to use the Laplace formula to expand the determinant along these columns iteratively to get
	\begin{align}
		% ~\eqref{eq:partialjacobian}
		%
	\label{eq:detdecomposition}
		&\abs{\det{\begin{bmatrix}
			\matseg{\identity_{\RXante}\kron\minp\sqrtcov}{\setJ}{\allel} & \matseg{(\miid\kron \identity_\blocklength)\diagd{\sqrtcov}}{\setJ}{\natseg{2}{\blocklength}}
		\end{bmatrix}}}\nonumber\\
		&\qquad\qquad=\prod_{j\in\natseg{\rankcov+2}{\blocklength}}\abs{\sum_{q=1}^{\rankcov} \miidc_{1q} \sqrtcovc_{j q}}
	\abs{\det(\matE)}
\end{align}
	where 
	\begin{align*}
	\matE&=	\begin{bmatrix}
		\identity_{\RXante}\kron\tilde{\minp}\sqrtcovsub & (\miid\kron \identity_{\rankcov+1}) 	\tilde{\matD}
		\end{bmatrix}\\
		\tilde{\matD}&= \matseg{\diagd{\sqrtcovsub}}{\allel}{\natseg{2}{\rankcov+1}},\qquad 
		\tilde{\minp}= \matseg{\minp}{\natseg{1}{\rankcov+1}}{\natseg{1}{\rankcov+1}}.
	\end{align*} 
	Next, using simple properties of the Kronecker product and exploiting the block-diagonal structure of $\tilde{\matD}$, we factorize $\matE$ into a product of simple terms:
	\begin{align}
		\matE&=\left(\identity_{\RXante}\kron\tilde{\minp}\right)(\miid\kron \identity_{\rankcov+1})\begin{bmatrix}
				\left(\identity_{\rankcov}\kron\tilde{\sqrtcov}\right) & \tilde{\matD}
		\end{bmatrix}\nonumber\\
		&\times\begin{bmatrix}
			(\miid^{-1}\kron \identity_\rankcov) & \matzero \\ 
				\matzero &  	\identity_{\rankcov}
		\end{bmatrix}
		\begin{bmatrix}
			\identity_{\rankcov^2} & \matzero \\ 
			\matzero &  \matseg{\tilde{\minp}}{\natseg{2}{\rankcov+1}}{\natseg{2}{\rankcov+1}}^{-1}
		\end{bmatrix}.
		\label{eq:factor1}
	\end{align}
		The proof is completed by inserting~\eqref{eq:factor1} into~\eqref{eq:detdecomposition}, and using the multiplicativity of the determinant.	
		
%\end{proof}	

	 \section{}%The Matrix $\matM_3$ in~\eqref{eq:M3} Is Full Rank}
	\label{app:M_three_full_rank}

	\begin{lem}
	\label{lem:spark}
		Let $\mat$ be an $(N+1)\times N$ matrix. If any set of $N$ rows of~$\mat$ is linearly independent, then the $N(N+1)\times N(N+1)$ matrix $\hat{\mat}$ defined as
		$\hat{\mat}=
		\begin{bmatrix}
				\left(\identity_{N}\kron\mat\right) &\! \matseg{\diagd{\mat}}{\allel}{\natseg{2}{N+1}} 
		\end{bmatrix}
		$
		has full rank.
	\end{lem}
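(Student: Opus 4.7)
The plan is to show that $\hat{\mat}$ has trivial null space. I would take a putative null-space element $\tp{[\tp\vecu\ \tp\vecv]}$ with $\vecu\in\complexset^{N^2}$, partition it as $\vecu = \tp{[\tp{\vecu_1}\ \cdots\ \tp{\vecu_N}]}$ with each $\vecu_k\in\complexset^N$, and let $v_1,\ldots,v_N$ be the entries of $\vecv\in\complexset^N$. A direct inspection of the block-diagonal structure of $\identity_N\kron\mat$ together with the structure of $\matseg{\diagd{\mat}}{\allel}{\natseg{2}{N+1}}$ shows that the $k$th block of length $N+1$ of $\hat\mat\,\tp{[\tp\vecu\ \tp\vecv]}$ equals $\mat\vecu_k$ plus a vector whose first entry is zero and whose $i$th entry (for $2\le i\le N+1$) is $\matc_{i,k}\,v_{i-1}$. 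Letting $\veca_i\in\complexset^N$ denote the transpose of the $i$th row of $\mat$, the null-space condition reduces to
\begin{equation*}
\tp{\veca_1}\vecu_k = 0, \qquad \tp{\veca_i}\vecu_k + \matc_{i,k}\,v_{i-1} = 0,\ i=2,\ldots,N+1,
\end{equation*}
for every $k=1,\ldots,N$.

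Next I would invoke \propspark twice. Since the last $N$ rows $\veca_2,\ldots,\veca_{N+1}$ are linearly independent, they form a basis of $\complexset^N$, so there are unique coefficients $\alpha_2,\ldots,\alpha_{N+1}$ with $\veca_1=\sum_{i=2}^{N+1}\alpha_i\veca_i$. The first use of \propspark is to observe that every $\alpha_i$ must be nonzero: if $\alpha_{i_0}=0$ for some $i_0$, then $\veca_1$ would lie in the span of the remaining $N-1$ vectors $\{\veca_i\}_{i\ne i_0,\,i\ge 2}$, so the $N$ rows $\{\veca_1\}\cup\{\veca_i\}_{i\ne i_0,\,i\ge 2}$ of $\mat$ would be linearly dependent, a contradiction. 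Substituting this expansion into $\tp{\veca_1}\vecu_k=0$ and eliminating each $\tp{\veca_i}\vecu_k$ through the second family of equations yields
\begin{equation*}
\sum_{i=2}^{N+1}\alpha_i\,\matc_{i,k}\,v_{i-1}=0,\qquad k=1,\ldots,N.
\end{equation*}
Setting $w_j\define\alpha_{j+1}\,v_j$, this system reads $\tp{\matseg{\mat}{\natseg{2}{N+1}}{\allel}}\vecw=\veczero$. The coefficient matrix is precisely the $N\times N$ submatrix of $\mat$ consisting of its last $N$ rows, which by the second application of \propspark is nonsingular; hence $\vecw=\veczero$, and since every $\alpha_{j+1}\ne 0$ this forces $\vecv=\veczero$. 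Plugging $\vecv=\veczero$ back into the second family of equations leaves $\tp{\veca_i}\vecu_k=0$ for all $i\ge 2$ and all $k$; the linear independence of $\veca_2,\ldots,\veca_{N+1}$ then forces $\vecu_k=\veczero$ for every $k$, completing the argument.

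The main obstacle is the opening step of unpacking the nested Kronecker/block-diagonal structure of $\hat\mat$ into a coupled pair of scalar equations that can then be decoupled; once the problem has been reduced to a linear system in $\vecv$ alone, the combinatorial strength of \propspark closes everything cleanly. The hypothesis that \emph{any} $N$ rows of $\mat$ are linearly independent---rather than some single fixed $N$-subset---is essential in both applications above: it guarantees $\alpha_i\ne 0$ for every $i$, and it guarantees that the $N\times N$ coefficient matrix acting on $\vecw$ is invertible. A weaker assumption would break the chain of implications.
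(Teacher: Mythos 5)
Your proof is correct and is essentially the paper's argument transposed: the paper shows that the \emph{left} null space of $\hat{\mat}$ is trivial, whereas you show that the right null space is trivial. The two invocations of the hypothesis match up exactly --- your all-nonzero coefficients $\alpha_i$ in the expansion of the first row over the remaining $N$ rows are, up to sign and normalization, the all-nonzero entries of the paper's one-dimensional left-kernel vector $\altvectr$ of $\mat$, and both arguments close by invoking the nonsingularity of $\matseg{\mat}{\natseg{2}{N+1}}{\allel}$.
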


\begin{proof}
	%	First note that  $\spark(\tp{\mat})=N+1$ implies that $\rank(\mat)=N$. 
	The proof is by contradiction. 
	Assume that $\hat{\mat}$ does not have full rank. Then there exists an $N(N+1)$-dimensional nonzero vector $\tp{\vectr}=\begin{bmatrix}\tp{\vectr_1}& \cdots & \tp{\vectr_N}\end{bmatrix}$, where $\vectr_n \in \complexset^{N+1}$, such that $\tp{\vectr} \hat{\mat}=\veczero$. 
Because 
$\hat{\mat}=\begin{bmatrix}
		\left(\identity_{N}\kron\mat\right) &\! \matseg{\diagd{\mat}}{\allel}{\natseg{2}{N+1}} 
\end{bmatrix}$, 
we have in particular that 
\begin{inparaenum}[(i)]
\item $\tp{\vectr}\left(\identity_{N}\kron\mat\right)=\veczero$ and  
\item $\tp{\vectr}\matseg{\diagd{\mat}}{\allel}{\natseg{2}{N+1}}=\veczero$. 
\end{inparaenum}
We next analyze these two equalities separately.
Equality~(i) can be restated as $\tp{\vectr_n}\mat=\veczero$ for all $n\in\natseg{1}{N}$, 
which implies that all vectors $\vectr_n$ lie in the kernel of the $N\times (N+1)$ matrix  $\tp{\mat}$. 
Because $\tp{\mat}$ has rank $N$, its kernel must be of dimension $1$. 
Hence,  all vectors $\vectr_n$ must be collinear, i.e., there exists a vector~$\altvectr$ and a set of $N$ constants $c_n$ such that $\vectr_n=c_n\altvectr$, for all $n\in\natseg{1}{N}$.
The vector~$\altvectr$ and at least one of the constants $c_n$ must be nonzero because $\vectr$ is nonzero. 
Furthermore, because $\tp{\altvectr}\mat=\veczero$, and because any set of~$N$ rows of $\mat$ is linearly independent by assumption, \emph{all} components of $\altvectr$ must be nonzero. 

We now use this property of~$\altvectr$ to analyze equality (ii), which can be restated as 
\begin{equation*}
	%\nonumber
	%\label{eq:daprop}
	\tp{\vectr}\matseg{\diagd{\mat}}{\allel}{\natseg{2}{N+1}}= [c_1\tp{\altvectr} \cdots \ c_N\tp{\altvectr}] \matseg{\diagd{\mat}}{\allel}{\natseg{2}{N+1}} =\veczero
\end{equation*}
or, after straightforward manipulations, as
\begin{equation*}
	%\nonumber
	\matseg{\diag(\altvectr)}{\natseg{2}{N+1}}{\natseg{2}{N+1}}\matseg{\mat}{\natseg{2}{N+1}}{\allel}\tp{[c_1 \cdots c_N]} =\veczero.
\end{equation*}
Because all the components of $\altvectr$ are nonzero, this last equality implies that
% The condition $\tp{\vectr} \hat{\mat}=\veczero$ also implies that
% \begin{equation}
% 	%\nonumber
% 	\label{eq:daprop}
% 	\tp{\vectr}\matseg{\diagd{\mat}}{\allel}{\natseg{2}{N+1}}= \begin{bmatrix}c_1\tp{\altvectr}& \ldots & c_N\tp{\altvectr}\end{bmatrix} \matseg{\diagd{\mat}}{\allel}{\natseg{2}{N+1}} =\veczero.
% \end{equation}
% 
% The assumption that $\spark(\tp{\mat})=N+1$ implies that each vector from the kernel of $\tp{\mat}$, and in particular the vector $\altvectr$, has {\it all} nonzero components. 
% %
% %
% %
% This allows us to remove $\altvectr$ from the equality in~\eqref{eq:daprop} and conclude, as a consequence of the block-diagonal structure of the matrix $\diagd{\mat}$, that
% %
%\begin{equation}
%	\nonumber
	$\matseg{\mat}{\natseg{2}{N+1}}{\allel}\tp{[c_1 \cdots c_N]} =\veczero.$
%\end{equation}
%
However, this contradicts the assumption that any set of $N$ rows of $\mat$ is linearly independent (recall that at least one of the constants $c_n$ is nonzero). 
Hence, $\hat{\mat}$ must have full rank.
\end{proof}

\bibliographystyle{IEEEtran}
\bibliography{IEEEabrv,publishers,confs-jrnls,vebib}

\end{document}